\documentclass[conference,letterpaper,onecolumn]{IEEEtran}
%% depending on your installation, you may wish to adjust the top margin:
\addtolength{\topmargin}{9mm}
\usepackage[margin=0.7in,footskip=0.25in]{geometry}
\usepackage[utf8]{inputenc} 
\usepackage[T1]{fontenc}
\usepackage{url}

\usepackage{bm}
\usepackage{bbm}
\usepackage{ifthen}
\usepackage{cite}
\usepackage{enumerate,enumitem}
\usepackage[cmex10]{amsmath} 
\usepackage[normalem]{ulem}

\usepackage{hyperref}

\usepackage{changepage,mathrsfs}
\usepackage{pgfplots}
\pgfplotsset{width=10cm,compat=newest}
\usepgfplotslibrary{units}
\usepackage{pgfplotstable}
\usepackage[T1]{fontenc}
\usepackage{centernot}
\usepackage{amsmath,amssymb,amsthm,amsfonts}
\usepackage{tikz,amsfonts}
\usepackage{graphicx}
\usepackage{adjustbox}
\usepackage{verbatim}
\usepackage{color}
\usepackage{pgffor}
\usepackage{cleveref}
\usepackage{array}
\usepackage{proba}
\usepackage{bm}

\theoremstyle{plain}% default
\newtheorem{thm}{Theorem}
\newtheorem{lemma}[thm]{Lemma}
\newtheorem{definition}{Definition}
\newtheorem{rem}{Remark}

\DeclareMathAlphabet      {\mathbfit}{OML}{cmm}{b}{it}

\newcommand{\bY}{\mathbf{Y}}
\newcommand{\by}{\mathbfit{y}}
\newcommand{\bX}{\mathbf{X}}
\newcommand{\bx}{\mathbfit{x}}

\newcommand{\unif}{\textnormal{Unif}}
\allowdisplaybreaks
\newtheorem{conj}{Conjecture}

\title{A Differential Equation Approach to the Most-Informative Boolean Function Conjecture}
\author{%
  \IEEEauthorblockN{Zijie Chen, Amin Gohari, and Chandra Nair}
  \IEEEauthorblockA{Department of Information Engineering\\
The Chinese University of Hong Kong\\
Sha Tin, NT, Hong Kong\\
\{zijie,agohari,chandra\}@ie.cuhk.edu.hk}
}
%\date{November 2024}

\begin{document}
\maketitle
\begin{abstract}
    We study the most-informative Boolean function conjecture using a differential equation approach. This leads to a formulation of a functional inequality on finite-dimensional random variables. We also develop a similar inequality in the case of the Hellinger conjecture. Finally, we conjecture a specific finite dimensional inequality that, if proved, will lead to a proof of the Boolean function conjecture in the balanced case. We further show that the above inequality holds modulo four explicit inequalities (all of which seems to hold via numerical simulation) with the first three containing just two variables and a final one involving four variables.
\end{abstract}

\section{Introduction}
Let $\mathbb{H}^n = \{-1,1\}^n$  denote the $n$-dimensional Boolean Hypercube centered at $\bf{0}$. The most informative Boolean function conjecture states the following:
\begin{conj}[\hspace{1sp}\cite{kumar2013Boolean}]
\label{conj:mi}
    Let $\bX\sim\unif{(\mathbb{H}^n)}$ be a random variable distributed uniformly on the Boolean Hypercube and $\bY$ be received after passing each bit of $\bX$ through a BSC channel with cross-over probability $p = \frac{1 - \rho}{2} \in [0,1]$. Let $f:\mathbb{H}^n\mapsto \mathbb{H}$ be a Boolean function that maps a Boolean sequence to a binary value. Then the following inequality holds:
    \begin{align*}
        I(f(\bX);\bY) \leq 1 - H_2\left(p\right),
    \end{align*}
    where $H_2(p) = -p\log_2(p) - (1-p)\log_2(1-p)$ is the binary entropy function.
\end{conj}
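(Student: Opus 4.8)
\emph{Proof approach.} The plan is to convert the one-shot inequality into a statement about a one-parameter semigroup of channels and to analyze the resulting differential inequality, as the title promises. First I would replace the single $\mathrm{BSC}(p)$ by the noise flow $\{\bY_t\}_{t\ge 0}$, where $\bY_t$ is the output of $\bX$ through a $\mathrm{BSC}$ of correlation $\rho_t=e^{-t}$; since BSCs compose multiplicatively in $\rho$, the pair $(\bX,\bY_t)$ is Markov in $t$ and $\bY_t\sim\unif(\mathbb{H}^n)$ for every $t$. Writing $U=f(\bX)\in\mathbb{H}$, the posterior mean $g_t(\by):=\mathbb{E}[U\mid \bY_t=\by]$ equals $T_{\rho_t}f$, the noise (Bonami--Beckner) operator applied to $f$, so $g_t=e^{-tN}f$ with $N$ the number operator on $\mathbb{H}^n$. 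Then $I(U;\bY_t)=H(U)-\langle h(g_t)\rangle$, where $h(v):=H_2\!\big(\tfrac{1+v}{2}\big)$ is concave on $[-1,1]$ and $\langle\cdot\rangle$ denotes the average over $\unif(\mathbb{H}^n)$. In the balanced case $H(U)=1$, the target at the original $p$ --- the value at time $t=\log\tfrac1\rho$ of $\Phi(t):=I(U;\bY_t)$ --- is equivalent to $q(t)\ge p_t$ for all $t\ge 0$, where $p_t:=\tfrac{1-e^{-t}}{2}$ and $q(t):=H_2^{-1}\!\big(\langle h(g_t)\rangle\big)\in[0,\tfrac12]$. At $t=0$ both sides are $0$, at $t=\infty$ both are $\tfrac12$, and the dictator $f(\bx)=x_1$ gives equality throughout.

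\emph{Differentiating along the flow.} Using $\partial_t g_t=-Ng_t$ together with the hypercube Dirichlet-form identity $\langle\phi,N\psi\rangle=\tfrac14\sum_{i=1}^n\big\langle(\phi-\phi^{\oplus i})(\psi-\psi^{\oplus i})\big\rangle$ (with $\phi^{\oplus i}$ the function $\phi$ with coordinate $i$ flipped), one obtains the de~Bruijn-type identity
\[
 \frac{d}{dt}H(U\mid\bY_t)\;=\;-\Phi'(t)\;=\;\frac14\sum_{i=1}^n\big\langle\big(h'(g_t^{\oplus i})-h'(g_t)\big)\big(g_t-g_t^{\oplus i}\big)\big\rangle\;\ge\;0,
\]
the sign being forced by concavity of $h$. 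Since $p_t$ solves the autonomous ODE $p'=\tfrac12-p$ with $p(0)=0$, the comparison $q(t)\ge p_t$ follows by a standard differential-inequality argument once I show $q'(t)\ge\tfrac12-q(t)$, equivalently
\[
 -\Phi'(t)\;\ge\;\Big(\tfrac12-q(t)\Big)\log_2\frac{1-q(t)}{q(t)}\qquad\text{for all }t>0,
\]
an inequality the dictator attains with equality and which, say, $f(\bx)=x_1x_2$ satisfies strictly.

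\emph{Reduction to a finite-dimensional inequality.} Conditioning on all coordinates but the $i$-th, the $i$-th summand above is $\mathbb{E}_{\by_{-i}}\!\big[(h'(b)-h'(a))(a-b)\big]$ with $a=g_t(\by_{-i},+1)$, $b=g_t(\by_{-i},-1)$, while $\langle h(g_t)\rangle=\mathbb{E}_{\by_{-i}}\!\big[\tfrac12 h(a)+\tfrac12 h(b)\big]$; both are averages of local functionals of the pair $(a,b)$, which suggests reducing the differential inequality to a pointwise statement about such posterior pairs. The summands cannot be handled independently (many vanish for junta-like $f$), so the reduction also needs to control the interaction of two flipped coordinates, i.e. the four values $g_t$ takes on a $2$-face. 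Together these produce a functional inequality in a fixed number of jointly distributed reals arising as posteriors of a Boolean function seen through the $\mathrm{BSC}$ (Boolean-valuedness is essential) --- the ``functional inequality on finite-dimensional random variables'' announced in the abstract --- which the paper in turn shows follows from four explicit scalar inequalities: three in two variables, one in four. The same scheme with $h$ replaced by the functional of the Hellinger conjecture gives the analogous reduction there. I would establish these explicit inequalities by first certifying them numerically and then reducing each to the sign of a one-variable function via elementary monotonicity and convexity arguments.

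\emph{Main obstacle.} The crux is making the last step rigorous, and it has three pressure points. (i) The $t\to 0^+$ regime is singular: $g_0=f$ is $\pm1$-valued and $h'(g_0)=\pm\infty$, so both sides of the differential inequality diverge and it must be read in integrated form; here one uses that $q'(0^+)=\tfrac12\langle f,Nf\rangle=\tfrac12\sum_{S}|S|\,\widehat f(S)^2\ge\tfrac12$ for balanced $f$, with equality only for dictators. (ii) The sum over coordinates genuinely fails to decouple termwise, so the passage from the per-coordinate inequality to the global one relies on the four-variable interaction inequality, and it is there that numerical plausibility does not obviously collapse to a short analytic certificate. (iii) The unbalanced case is not covered by this scheme, since the boundary slack at $t=0$ is different and needs a separate argument. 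I expect essentially all of the remaining difficulty to lie in proving the four explicit inequalities, the four-variable one in particular.
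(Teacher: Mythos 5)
You are being asked to prove Conjecture~\ref{conj:mi}, and the first thing to say is that the paper does not prove it either: it remains open. What the paper establishes is a conditional reduction (Lemma~\ref{lem:conjimpbal}): \emph{if} the explicit function $\phi$ of Theorem~\ref{thm24} belongs to the class $\Psi$ (Conjecture~\ref{conj:gue}), then Conjecture~\ref{conj:mi} holds for balanced $f$; and $\phi\in\Psi$ is in turn reduced to the explicit inequalities of Conjectures~\ref{conj2} and~\ref{newconj}, which are only verified numerically. Your proposal is in the same spirit --- run the BSC semigroup, differentiate $H(f(\bX)\mid\bY_t)$, and compare with the autonomous ODE solved by the dictator --- and your target inequality $q'\ge\tfrac12-q$ is, after the reparametrization $\rho_t=e^{-2t}$ and the substitution $q=H_2^{-1}(\gamma)$, exactly the inequality $\gamma'(t)\ge\eta(\gamma(t))=\phi(\tfrac12,\gamma(t))$ that the paper derives from $\phi\in\Psi$. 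So as a strategy it is aligned, but as a proof it has the same unfilled gap as the paper: the finite-dimensional inequality at the heart of the reduction is not established, and you say so yourself.

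Where you commit to specifics, however, your reduction diverges from the paper's and would not close as described. You propose a per-coordinate decomposition whose extra ingredient is the interaction of two flipped coordinates (the values of $g_t$ on a $2$-face), and you assert that Boolean-valuedness is essential. The paper does the opposite on both counts. It closes the argument by induction on the dimension $n$: condition on the last coordinate, apply the induction hypothesis to the two $(n-1)$-subcubes, and absorb the cross-edges in direction $n$ via a single functional inequality, condition \eqref{psi:constraint}, on a pair of $(0,1)$-valued random variables $(U,W)=(v^{+}_{\tilde{\bY}},v^{-}_{\tilde{\bY}})$ constrained only through the four moments $\mathbb{E}[U]$, $\mathbb{E}[W]$, $\mathbb{E}[H_2(U)]$, $\mathbb{E}[H_2(W)]$; no $2$-faces appear. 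Crucially, for this induction to make sense the paper must \emph{relax} $f(\bX)$ to an arbitrary soft binary $F$ with $P(F=-1\mid\bx)\in(0,1)$, since the restrictions $v^{\pm}$ to a subcube are not posteriors of a Boolean function; this relaxation is also what removes your singularity at $t=0^{+}$ (one integrates from $\gamma(0)=H(F\mid\bX)>0$ and only lets $F\to f(\bX)$ at the very end). Finally, the ``four explicit inequalities'' of the abstract are not face-interaction inequalities but properties of the explicit candidate $\phi$ obtained from the extremal problem $\zeta$ (namely Conjectures~\ref{conj2} and~\ref{newconj}). The missing idea in your sketch is precisely this pair of moves --- soft relaxation plus induction on dimension against a potential $\psi(\Pr[F=-1],\gamma(t))$ --- without which the coordinatewise Dirichlet-form bound does not assemble into a dimension-free lower bound on $-\Phi'(t)$; and even with them, the conjecture is only reduced, not proved.
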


A weaker form of the conjecture  was studied in  
\cite{anantharam2013hypercontractivity}.
Samorodnitsky \cite{7498615} demonstrated the existence of a positive constant $\rho_0$ such that the conjecture is true for balanced Boolean functions when $|\rho| \leq \rho_0$. This result was subsequently strengthened by Lei Yu \cite{Yu23}, who confirmed that the conjecture holds for balanced Boolean functions when $|\rho| \leq 0.44$. Li and M\'{e}dard \cite{9272787} investigated the Boolean function that maximizes $\EX{|T_\rho f|^\alpha}$ for a fixed mean, where $\alpha \in [1,2]$. They conjectured that the dictator function is optimal. Subsequently, Barnes and Ozgur in \cite{9174063} showed that the conjecture is related to solving the $\alpha$-Non-Interactive Correlation Distillation (NICD) problem raised by Li and M\'edard in \cite{9272787}.  All of the previous approaches view the problem as a given instance, with parameters given by the channel transition probability and the mean of the Boolean function. In this work, we traverse a path in the space of channels and analyze our objective function along this trajectory. 

It is also known that Conjecture \ref{conj:mi} is implied by the following stronger conjecture, which we call the ``Hellinger Conjecture":
\begin{conj}[\hspace{1sp}\cite{abcjn17}]
    \label{conj:hel}
    Under the same assumptions as in Conjecture \ref{conj:mi}, the following holds:
    \begin{equation}
    \sqrt{1 - \EX{f(\bX)}^2} - \EX{\sqrt{1 - \left((T_\rho f)(\mathbf{Y})\right)^2}} \leq 1 - \sqrt{1 - \rho^2},
    \end{equation}
    where $T_\rho f(\by)=\EX{f(\bX) | \mathbf{Y}=\by}$.
\end{conj}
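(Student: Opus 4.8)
The plan is to turn the Hellinger inequality into a quantity that varies monotonically along a one-parameter family of channels, differentiate along it, and reduce Conjecture~\ref{conj:hel} to a finite-dimensional functional inequality; I treat the balanced case $\EX{f(\bX)}=0$ first, since there the inequality is tight at both ends of the trajectory, which is exactly what calls for a differential rather than a one-shot argument. Write $g_\rho=T_\rho f$, so that $g_\rho(\bY)=\EX{f(\bX)\mid\bY}$ and $\EX{g_\rho(\bY)}=\EX{f(\bX)}=0$, and set $\phi(s)=\sqrt{1-s^2}$, concave on $[-1,1]$. The statement to prove is $\EX{\phi(g_\rho(\bY))}\ge\sqrt{1-\rho^2}$; both sides equal $1$ at $\rho=0$ (where $g_\rho\equiv0$) and $0$ at $\rho=1$ (where $g_\rho=f\in\{-1,1\}$), and the dictator saturates it for every $\rho$.

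The first step uses the heat-semigroup structure of the BSC: with $\rho=e^{-t}$, $t\in[0,\infty]$, one has $T_{e^{-t}}=e^{-tL}$ for the number operator $L=\sum_i\tfrac{I-\sigma_i}{2}$ on $\mathbb{H}^n$, where $\sigma_i$ flips the $i$-th coordinate. Put $g_t=e^{-tL}f$ and $D(t)=\EX{\phi(g_t)}-\sqrt{1-e^{-2t}}$, so $D(0)=D(\infty)=0$ and the conjecture asserts $D\ge0$. Using $\dot g_t=-Lg_t$ and the self-adjointness of $L$,
\begin{align*}
\frac{d}{dt}\EX{\phi(g_t)}\;&=\;-\EX{\phi'(g_t)\,Lg_t}\\
&=\;-\tfrac14\sum_i\EX{\bigl(\phi'(g_t)-\phi'(g_t\circ\sigma_i)\bigr)\bigl(g_t-g_t\circ\sigma_i\bigr)}\;\ge\;0,
\end{align*}
the last step because $\phi'$ is decreasing. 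The conjecture thus becomes a monotonicity/sign statement for $D$ which, since $D(0)=D(\infty)=0$, reduces to controlling at each time $t$ the per-coordinate Dirichlet contributions above against $\tfrac{d}{dt}\sqrt{1-e^{-2t}}=\rho^2/\sqrt{1-\rho^2}$; and the $i$-th contribution depends on $g_t$ only through the joint law of the pair $\bigl(g_t(\bY),\,g_t(\bY\oplus e_i)\bigr)$, a finite-dimensional object.

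The second step pins that pair down. Conditioning $g_t=T_{e^{-t}}f$ on the $i$-th output bit exhibits $\bigl(g_t(\bY),g_t(\bY\oplus e_i)\bigr)$ as a $[-1,1]^2$-valued pair with common midpoint $\bar h\in[-1,1]$ whose symmetric deviation $\pm e^{-t}\delta$ is the $e^{-t}$-contraction of the deviation $\pm\delta$ of a pair still valued in $[-1,1]$; summing over $i$ tensorizes. One is thereby led to the ``functional inequality on finite-dimensional random variables'' of the abstract: a per-coordinate inequality in $\EX{\phi(U)}$, the Dirichlet-type term $\EX{(\phi'(U)-\phi'(V))(U-V)}$ and a few low-order moments of $(U,V)$ --- with $U$ $[-1,1]$-valued, $V$ a BSC-type smoothing of $U$ about their common conditional midpoint, and $\EX{U}=0$ --- that tensorizes to give $D(t)\ge0$ and is saturated by the dictator's two-point configuration. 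The non-balanced case runs the same way, with $D(1)=1-\sqrt{1-\EX{f(\bX)}^2}>0$ giving slack at $\rho=1$ and $\EX{U}$ entering as a free parameter.

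The crux, and the expected main obstacle, is this finite-dimensional inequality. It is delicate because $\phi'(s)=-s/\sqrt{1-s^2}$ blows up as $s\to\pm1$ --- exactly where $f$ sits at $t=0$ --- so the singular Dirichlet term must be shown absorbed by the $\sqrt{1-\rho^2}$ term, and, once the conditional law of $V$ given $U$ is kept, the inequality is honestly multivariate; indeed, in the limit $\rho\to1$ the bound $D\ge0$ already reduces to the combinatorial claim that $\EX{\sqrt{s(\bX)}}\ge1$ for every balanced Boolean $f$, where $s(\bx)$ counts the coordinates whose flip changes $f$ --- an equality for the dictator that appears to hold but needs its own argument. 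I would attack the general inequality by (i) passing to the circle coordinates $(g,\phi(g))=(\cos\theta,\sin\theta)$, making $\phi,\phi'$ bounded trigonometric expressions and taming the endpoints; (ii) using convexity in the conditional law of $V$ given $U$ to reduce to a binary companion, plus a further extreme-point reduction bringing $U$ to small support; and (iii) settling the residual two- or three-scalar-variable inequalities either in closed form or, as in the mutual-information treatment, by reducing them to a short explicit list of one- and two-variable inequalities checkable analytically or numerically. Should (iii) resist a closed-form proof, the honest outcome is a conditional theorem --- the balanced Hellinger Conjecture modulo that explicit finite list --- exactly paralleling the four-inequality reduction obtained for the mutual-information conjecture.
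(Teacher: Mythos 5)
The statement you are asked to prove is a \emph{conjecture}: the paper does not prove it, and it remains open. Your proposal is, correspondingly, not a proof but a program, and it is essentially the program the paper itself carries out in Section~\ref{sec3}. You reparametrize the BSC as a semigroup, differentiate $r(t)=\EX{\sqrt{1-(T_{\rho_t}f)^2}}$ in $t$, obtain a Dirichlet-form expression for $r'(t)$ (your computation of $\frac{d}{dt}\EX{\phi(g_t)}$ matches the paper's lemma for $r'(t)$ up to the harmless choice $\rho=e^{-t}$ versus $\rho_t=e^{-2t}$), and then propose to close the argument by a per-coordinate, finite-dimensional functional inequality that tensorizes under the induction on $n$. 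That is exactly the role of the class $\Psi_H$ and Theorem~\ref{thm2} in the paper.

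The genuine gap is the step you yourself flag as ``the crux'': actually exhibiting a finite-dimensional inequality (an element $\psi\in\Psi_H$) strong enough that integrating the resulting ODE $r'(t)\geq\psi(\mathbb{E}[F],r(t))$ recovers $r(t)\geq\sqrt{1-\rho_t^2}$ in the balanced case. Nothing in your steps (i)--(iii) produces such a $\psi$, and the difficulty is not merely technical bookkeeping: the paper shows that the natural candidate $\psi(a,b)=\frac{1-a^2-b^2}{b}$ --- precisely the function whose ODE would integrate to the conjectured bound --- violates the required constraint \eqref{psi:constraint2}, with an explicit three-point counterexample. Unlike the mutual-information case, where the paper can at least conjecture an explicit maximal element $\phi$ and reduce its membership in $\Psi$ to a short list of low-dimensional inequalities, here the paper states it has no candidate for the maximal element of $\Psi_H$. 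So your proposal reproduces the reduction but leaves the conjecture exactly as open as the paper does; the ``conditional theorem modulo an explicit finite list'' you offer as a fallback does not exist yet either, because no such list has been identified for the Hellinger case.
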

Recently, \cite{chen2024optimality} found a parametric class of conjectures that interpolates between Conjecture \ref{conj:mi} and Conjecture \ref{conj:hel}. Conjecture \ref{conj:hel} implies a conjecture about the sensitivity of Boolean functions. For a Boolean function \( f \), the sensitivity at a point \( \bx \), represented as \( s_f(\bx) \), is defined as the count of \( \bx \)'s neighbors for which the function produces a value contrary to \( f(\bx) \). Isoperimetric inequalities can help set limits on this sensitivity. For example, the Talagrand isoperimetric inequality demonstrates that for any balanced function, the following holds:
{ \[
\EX{\sqrt{s_f(\bX)}} \geq \frac{1}{\sqrt{2}}.
\]}
Conjecture \ref{conj:hel} implies a strengthened version of this inequality (still unproved) as follows: for any balanced functions, we have
{ \begin{align*}
\EX{\sqrt{s_f(\bX)}} \geq 1.
\end{align*}}
The sensitivity of Boolean functions has received attention in the math community; see \cite{bob97,beltran2023sharp,kahn2019isoperimetric,dir24} for some related results on the sensitivity of Boolean functions. Building on the work in \cite{kahn2019isoperimetric}, it was shown very recently that for all balanced functions \begin{align*}
\EX{s_f^\beta(\bX)} \geq 1,
\end{align*}
for all $\beta \geq 0.50057$.

In this paper, we introduce a differential equation approach to investigate the two conjectures mentioned earlier. We consider a path comprising Binary Symmetric Channels (BSC) with a crossover probability that evolves over time, resulting in a channel output denoted as $\bY_t$. We compute the derivative of our objective function, which represents the mutual information between $f(\bX)$ and the output $\bY_t$ along the path. By establishing bounds on the derivative, we derive new constraints on the endpoints of this path. Observe that this approach is akin to the auxiliary receiver approach in \cite{gon21} but uses a continuum of auxiliary receivers instead of just one auxiliary receiver, i.e., BSC channels whose crossover probability is smaller than that of the given channel. 

Assume that $\bY_t$ and $\bY_{t+\epsilon}$ are two channel outputs at times $t$ and $t+\epsilon$. Instead of single-letterizing the difference between mutual information terms
$I(f(\bX);\bY_t)-I(f(\bX);\bY_{t+\epsilon})$
using the past/future of $\bY_t$ and $\bY_{t+\epsilon}$, we take the derivative of $I(f(\bX);\bY_t)$ with respect to $t$ and then combine the derivative with a natural induction technique on the dimension of the hypercube. This process leads to a functional inequality (on finite dimensions) whose solutions provide requisite lower bounds to the quantity of interest. In particular, backed by numerical simulations, we also conjecture that a particular function satisfies the functional inequality induced from Conjecture \ref{conj:mi}, which, if true, would establish it for balanced functions (and perhaps more).

This paper is organized as follows: Section \ref{sec2} describes the differential equation framework through the example of the most informative Boolean function conjecture. Section \ref{sec3} applies the framework to the Hellinger conjecture. All proofs are given in Section \ref{sec:proofs}.

\textbf{Notation:} We use the following notation in this paper. We use the bold letter $\bX$ to denote the vector of random variables $\bX=(X_1, X_2, \cdots, X_n)$. We use uppercase letter to denote random variables while their values are depicted in lowercase letters. Let $$H_2(x)=x\log_2\frac{1}{x}+(1-x)\log_2\frac{1}{1-x}$$ be the binary entropy function, $$J(x) = \log_2 \frac{1 - x}{x}$$ be the derivative of $H_2(x)$. The inverse function $H_2^{-1}$ will map $[0,1]$ to $[0,\frac12]$. Let $D_2(x\|y)=x\log_2\frac{x}{y}+(1-x)\log_2\frac{1-x}{1-y}$ be the binary $KL$-divergence between the distributions $(x,1-x)$ and $(y,1-y)$.

\section{The general framework of the differential equation approach}
\label{sec2}
Consider the setting in Conjecture \ref{conj:mi}, and let $$p_t = \frac{1 - e^{-2t}}{2}$$
be the crossover probability for some $t \in [0,\infty)$, i.e., $\rho_t=e^{-2t}$. Let 
$$\bY_t=(Y_{t,1},Y_{t,2},\cdots, Y_{t,n})$$ 
be the output of the BSC channel with crossover probability $p_t$. %Without loss of generality, we can assume that $\bX\rightarrow \bY_{t_1}\rightarrow \bY_{t_2}$ holds for any $t_1<t_2$. 
Take some arbitrary $P_{F|\bX}$ where $F\in \{-1,1\}$ is a binary random variable such that $P_{F|\bX}(F=-1|\bX=\bx)\in(0,1)$ for all $\bx$. Note that we are excluding the case of $F$ being a function of $\bX$, but the function case can be considered to be a limiting case when the probabilities tend to $0$ and $1$. Let $P_{F,\bX,\bY_t}=P_{F|\bX}P_{\bX,\bY_t}$. Without loss of generality, we can assume that $F\rightarrow \bX\rightarrow \bY_{t_1}\rightarrow \bY_{t_2}$ holds for any $t_1<t_2$. Define $$\gamma(t) = H(F|\bY_t).$$
Note that $\gamma(0)=H(F|\bX)$, $\gamma(\infty)=H(F)$, $I(F;\bX)=\gamma(\infty)-\gamma(0)$ and $I(F;\bY_t)=\gamma(\infty)-\gamma(t)$.
 The definition shows that $\gamma(t)=H(F)-I(F;\bY_t)$ is an increasing function in $t$ because of the data processing inequality. 
 The following lemma (whose proof is given in Section \ref{sec:proofs}) provides the exact derivative.
\begin{lemma}\label{lmm1}
    \begin{align*}
        \frac{d\gamma(t)}{dt} &= \frac{1}{2^{n}}\sum_{\bx\sim \by}(v_{\bx}(t) - v_{\by}(t))(J(v_{\by}(t)) - J(v_{\bx}(t))),
    \end{align*}
    where $v_{\bx}(t) = \emph{Pr}(F = -1|\bY_t = \bx)$ for $\bx\in\mathbb{H}^n$. Here $\bx\sim \by$ stands for the Hamming distance $d_H(\bx,\by) = 1$ and the tuple $(\bx,\by),(\by,\bx)$ are only counted once in the summation.
\end{lemma}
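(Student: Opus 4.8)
The plan is to realize $(\bY_t)_{t\ge 0}$ as a continuous-time Markov process on $\mathbb{H}^n$ --- the random walk in which each coordinate flips independently at rate one --- and to show that this forces the posterior probabilities $v_\by(t)$ to satisfy the heat equation on the hypercube, after which one differentiates the entropy decomposition of $\gamma$ term by term. (Equivalently, one can expand $\gamma(t+\epsilon)-\gamma(t)$ to first order by conditioning on $\bY_t$ and noting that, given $\bY_t$, the coordinates of $\bY_{t+\epsilon}$ flip independently with probability $\epsilon+o(\epsilon)$; this produces the same first-order operator.)

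In detail, I would first record two consequences of $\bX\sim\unif(\mathbb{H}^n)$: (i) $\bY_t\sim\unif(\mathbb{H}^n)$, so $P(\bY_t=\by)=2^{-n}$ is independent of $t$; and (ii) $P(\bX=\bx\mid\bY_t=\by)=P(\bY_t=\by\mid\bX=\bx)$, since both marginals equal $2^{-n}$. Writing $\psi(\bx):=P(F=-1\mid\bX=\bx)\in(0,1)$, fact (ii) gives $v_\by(t)=\sum_\bx P(\bY_t=\by\mid\bX=\bx)\,\psi(\bx)=(T_t\psi)(\by)$, where $T_t$ is the $n$-fold tensor power of the $\mathrm{BSC}(p_t)$ operator. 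Because the BSCs compose along the Markov chain $\bY_{t_1}\to\bY_{t_2}$, the family $(T_t)_{t\ge0}$ is a semigroup, and its generator is $(\mathcal{L}\phi)(\by)=\sum_{\by'\sim\by}\big(\phi(\by')-\phi(\by)\big)$; this reduces to the one-coordinate identity $\dot p_t=1-2p_t$ (equivalently $e^{t(R-I)}=(1-p_t)I+p_tR$ for the single-bit reflection $R$). Consequently each $v_\by(\cdot)$ is a finite combination of exponentials $e^{-2jt}$, hence smooth; it is an average of the $\psi(\bx)$'s, hence stays in $(0,1)$ for all $t$; and it satisfies
$$\frac{d}{dt}v_\by(t)=\sum_{\by'\sim\by}\big(v_{\by'}(t)-v_\by(t)\big).$$

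Second, fact (i) gives $\gamma(t)=H(F\mid\bY_t)=\tfrac{1}{2^n}\sum_\by H_2\big(v_\by(t)\big)$, a finite sum of smooth functions of $t$, so I may differentiate under the sum; since $H_2'=J$ on $(0,1)$, the chain rule yields $\frac{d\gamma(t)}{dt}=\tfrac{1}{2^n}\sum_\by J\big(v_\by(t)\big)\sum_{\by'\sim\by}\big(v_{\by'}(t)-v_\by(t)\big)$. Grouping the ordered neighboring pairs into unordered edges, the edge $\{\bx,\by\}$ contributes $J(v_\bx)(v_\by-v_\bx)+J(v_\by)(v_\bx-v_\by)=(v_\bx(t)-v_\by(t))(J(v_\by(t))-J(v_\bx(t)))$, which is exactly the claimed formula. (As a sanity check, $J$ is decreasing, so every summand is nonnegative, consistent with $\gamma$ being increasing.)

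I do not anticipate a genuine obstacle. The two steps deserving care are the identification $v_\by(t)=(T_t\psi)(\by)$ and the resulting heat equation for $v_\by$ --- this is precisely where the continuous-time Markov structure does the work --- and the observation that $v_\by(t)\in(0,1)$ for all $t$, so that $H_2'$ is well defined along the whole path; the latter uses the hypothesis $P(F=-1\mid\bX=\bx)\in(0,1)$ (and explains why the deterministic case $F=f(\bX)$ must be treated as a limit). The interchange of $\tfrac{d}{dt}$ with $\sum_\by$ is immediate because the index set is finite.
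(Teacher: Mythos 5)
Your proposal is correct and follows essentially the same route as the paper: decompose $\gamma(t)=2^{-n}\sum_\by H_2(v_\by(t))$, establish the heat equation $\dot v_\by=\sum_{\by'\sim\by}(v_{\by'}-v_\by)$ (the paper does this via the $\varepsilon$-limit through the degraded channel $\bY_t\to\bY_{t+\varepsilon}$, you via the semigroup generator acting on $\psi$, which is the same computation), then apply the chain rule with $H_2'=J$ and symmetrize over edges. No gaps.
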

\begin{proof}
We have
    { \begin{align*}
        \gamma(t) &= H(F|\bY_t)= \EX{H(F|\bY_t = \by)}= \EX{H_2(v_\by(t))}= \frac{1}{2^n}\sum_\by H_2(v_\by(t)).
    \end{align*}}
    This implies that
    {\begin{align*}
        \frac{d\gamma(t)}{dt} &= \frac{1}{2^n}\sum_\by J(v_\by(t))\frac{dv_\by(t)}{dt}\\
        &= \frac{1}{2^n}\sum_\by J(v_\by(t)) \lim_{\varepsilon \downarrow 0} \frac{v_{\by}(t+\varepsilon)-v_{\by}(t)}{\varepsilon}\\
        &= \frac{1}{2^n}\sum_\by J(v_\by(t))\lim_{\varepsilon \downarrow 0} \frac{\sum_\bx P(F = -1,\bY_t = \bx|\bY_{t + \varepsilon} = \by) - v_\by(t)}{\varepsilon}\\
        &= \frac{1}{2^n}\sum_\by J(v_\by(t))\lim_{\varepsilon \downarrow 0} \frac{\sum_\bx v_\bx(t)P(\bY_t = \bx|\bY_{t + \varepsilon} = \by) -v_\by(t)}{\varepsilon}\\
        &= \frac{1}{2^n}\sum_\by J(v_\by(t))\lim_{\varepsilon \downarrow 0}\frac{\sum_\bx v_\bx(t)p_\varepsilon^{d_H(\bx,\by)}\left(1-p_\varepsilon\right)^{n - d_H(\bx,\by)} - v_\by(t)}{\varepsilon}\\
        &= \frac{1}{2^n}\sum_\by J(v_\by(t))\lim_{\varepsilon \downarrow 0}\frac{ \sum\limits_{\bx\sim \by}(v_\bx(t)-v_\by(t))p_\varepsilon\left(1-p_\varepsilon\right)^{n - 1}}{\varepsilon} + o(\varepsilon)\\
        &= \frac{1}{2^n}\sum_\by J(v_\by(t))\sum_{\bx\sim \by}(v_\bx(t) - v_\by(t))\\
        &= \frac{1}{2^n}\sum_{\bx\sim \by}(J(v_\by(t)) - J(v_\bx(t)))(v_\bx(t) - v_\by(t)).
    \end{align*}}
%    The proof of this Lemma can be found in the full version \cite{cgn25pre}.
\end{proof}
\begin{rem} Note that
    $$(u-w)(J(w)-J(u))=D_2(u\|w)+D_2(w\|u)$$ where $D_2$ is the binary KL divergence.
\end{rem}

\begin{definition}\label{def1}Let $\Psi$ be the class of all non-negative functions $\psi(a,b):(0,1)^2\mapsto \mathbb{R}$ that satisfy the following two conditions:
\begin{itemize}
    \item 
    $\psi(a,b)=0$ when $H_2(a)\leq b$. 
    \item $\psi(1-a,b)=\psi(a,b)$.
    %$\psi(a,H_2(a)) = \psi(1-a,H_2(a)) = 0$ for all $a\in(0,1)$,
    \item Let $P_X$ be an arbitrary distribution on $\mathcal{X}=\{1,2,3,4,5\}$, and $(u_x,w_x)\in(0,1)^2$ for $x\in\mathcal{X}$ be arbitrary. Then, the following 
holds:\begin{align}
        \label{psi:constraint}
        &\frac{1}{2}\mathbb{E}_X[(u_X-w_X)(J(w_X) - J(u_X))] \\
        &\quad\geq \psi\left(\frac{\EX{u_X+w_X}}{2},\frac{\EX{H_2(u_X) + H_2(w_X)}}{2}\right)  - \frac{\psi(\EX{u_X},\EX{H_2(u_X)}) + \psi(\EX{w_X},\EX{H_2(w_X)})}{2}.\nonumber
    \end{align}
\end{itemize}
\end{definition}
The following remark follows from Caratheodery's theorem. See Section \ref{sec:proofs} for details.
\begin{rem} Instead of imposing the third condition in Definition \ref{def1}, i.e. \eqref{psi:constraint}, for $\mathcal{X}$ of cardinality size $5$, we can equivalently require it for  any $P_X$ on a set $\mathcal{X}$ of arbitrary size.\label{rem:card} 
\end{rem}

\begin{thm}\label{thm1}
For every $\psi\in\Psi$ we have  \begin{align}\frac{d\gamma(t)}{dt} \geq \psi(\textnormal{Pr}[F=-1],\gamma(t)).\label{eqnGd}\end{align}
Consequently, if we let
$$g(x)=\int_{\gamma(0)}^{x}\frac{du}{\psi(\textnormal{Pr}[F=-1],u)},\qquad \forall x\geq \gamma(0).$$
we obtain
$$\gamma(t)\geq g^{-1}(t).$$
\end{thm}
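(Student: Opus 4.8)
The plan is to replace the statement by a stronger ``instance-free'' inequality about arbitrary functions on the hypercube, prove it by induction on the dimension $n$ via the decomposition of $\mathbb{H}^n$ into two subcubes, and then read off \eqref{eqnGd} together with the claimed ODE bound. For a function $v:\mathbb{H}^n\to(0,1)$ set
\begin{align*}
D_n(v)&=\frac{1}{2^{n}}\sum_{\bx\sim\by}(v_{\bx}-v_{\by})\big(J(v_{\by})-J(v_{\bx})\big), \\
A(v)&=\frac1{2^{n}}\sum_{\bx}v_{\bx},\qquad G(v)=\frac1{2^{n}}\sum_{\bx}H_2(v_{\bx}).
\end{align*}
I will show that $D_n(v)\ge\psi\big(A(v),G(v)\big)$ for every $n\ge0$ and every such $v$. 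Granting this, apply it to $v(t)$ with $v_{\bx}(t)=\textnormal{Pr}(F=-1\mid\bY_t=\bx)$: by Lemma~\ref{lmm1} we have $\gamma'(t)=D_n(v(t))$, while $\gamma(t)=G(v(t))$ by the computation in the proof of Lemma~\ref{lmm1}, and $\textnormal{Pr}[F=-1]=A(v(t))$ because $\bY_t$ is uniform on $\mathbb{H}^n$; hence $\gamma'(t)\ge\psi(\textnormal{Pr}[F=-1],\gamma(t))$, which is \eqref{eqnGd}.

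The instance-free inequality is proved by induction on $n$. For $n=0$ there are no edges, so $D_0(v)=0$, while $A(v)=v$ and $G(v)=H_2(v)$, and the first property in Definition~\ref{def1} forces $\psi(v,H_2(v))=0$; thus the claim holds. For the inductive step, split $\mathbb{H}^{n}$ along the last coordinate into two copies of $\mathbb{H}^{n-1}$, and let $v^{+},v^{-}:\mathbb{H}^{n-1}\to(0,1)$ be given by $v^{+}_{\mathbfit{z}}=v_{(\mathbfit{z},+1)}$ and $v^{-}_{\mathbfit{z}}=v_{(\mathbfit{z},-1)}$. Every edge of $\mathbb{H}^{n}$ lies either inside one of the two subcubes or joins $(\mathbfit{z},+1)$ to $(\mathbfit{z},-1)$; grouping the sum defining $D_n(v)$ accordingly gives
\begin{align*}
D_n(v)=\tfrac12 D_{n-1}(v^{+})+\tfrac12 D_{n-1}(v^{-})+E(v),\qquad
E(v):=\frac1{2^{n}}\sum_{\mathbfit{z}\in\mathbb{H}^{n-1}}(v^{+}_{\mathbfit{z}}-v^{-}_{\mathbfit{z}})\big(J(v^{-}_{\mathbfit{z}})-J(v^{+}_{\mathbfit{z}})\big),
\end{align*}
and likewise $A(v)=\tfrac12\big(A(v^{+})+A(v^{-})\big)$ and $G(v)=\tfrac12\big(G(v^{+})+G(v^{-})\big)$. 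Using the induction hypothesis on $v^{+}$ and $v^{-}$, it remains to establish
\begin{align*}
E(v)\ \ge\ \psi\big(A(v),G(v)\big)-\tfrac12\psi\big(A(v^{+}),G(v^{+})\big)-\tfrac12\psi\big(A(v^{-}),G(v^{-})\big).
\end{align*}
This is exactly inequality \eqref{psi:constraint} with $X$ uniform on $\mathbb{H}^{n-1}$ and $(u_{\mathbfit{z}},w_{\mathbfit{z}})=(v^{+}_{\mathbfit{z}},v^{-}_{\mathbfit{z}})$: its left-hand side equals $E(v)$ and its right-hand side equals the bound above (here Remark~\ref{rem:card} is needed, since the alphabet size $2^{n-1}$ exceeds $5$ once $n\ge4$). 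This closes the induction.

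For the ``consequently'' part, observe that each $v_{\bx}(t)$ is a smooth, $(0,1)$-valued function of $t$ --- it is a convex combination of the numbers $P_{F|\bX}(F=-1\mid\bX=\bx')\in(0,1)$ with weights that are polynomials in $p_t$, hence bounded away from $0$ and $1$ uniformly in $t$ --- so $\gamma$ is $C^1$ and Lemma~\ref{lmm1} applies. On any interval on which $\psi(\textnormal{Pr}[F=-1],\gamma(t))>0$ we then have, by \eqref{eqnGd},
\begin{align*}
\frac{d}{dt}\,g(\gamma(t))=\frac{\gamma'(t)}{\psi(\textnormal{Pr}[F=-1],\gamma(t))}\ \ge\ 1 ,
\end{align*}
and since $g(\gamma(0))=0$ this gives $g(\gamma(t))\ge t$; as $g$ is non-decreasing, $\gamma(t)\ge g^{-1}(t)$.

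I do not anticipate a genuine obstacle here: condition \eqref{psi:constraint} has been designed to be precisely the one-step inequality demanded by the subcube induction, so apart from the bookkeeping in the edge split (including the factors of $\tfrac12$), every step is routine. The only point deserving a word of care is the behaviour of $g$ on the set where $\psi(\textnormal{Pr}[F=-1],\cdot)$ vanishes --- in particular at $\gamma(\infty)=H_2(\textnormal{Pr}[F=-1])$, where the integral defining $g$ diverges --- but since $\gamma$ is non-decreasing and bounded above by $\gamma(\infty)$, the bound $\gamma(t)\ge g^{-1}(t)$ remains valid (possibly only vacuously). Note also that the symmetry property $\psi(1-a,b)=\psi(a,b)$ is not used anywhere in this argument.
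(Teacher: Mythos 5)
Your proof is correct and follows essentially the same route as the paper: induction on the dimension via the subcube decomposition along the last coordinate, with the induction hypothesis handling the two subcubes and condition \eqref{psi:constraint} (together with Remark~\ref{rem:card}) handling the cross edges. Your ``instance-free'' restatement and your base case $n=0$ are just cleaner packagings of what the paper does with the auxiliary variables $F_+,F_-$, and your careful treatment of the ODE-comparison step (which the paper omits) is a welcome but minor addition.
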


\begin{proof}
Take some arbitrary function $\psi(a,b)$. We would like to show that
$$\frac{d\gamma(t)}{dt} \geq \psi(\textnormal{Pr}(F = -1),\gamma(t)).$$

We prove the statement by induction on $n$. For the base case of $n=1$, we have
\begin{align*}
\gamma(t)&=\frac12 H(v_1(t))+\frac12 H(v_{-1}(t))\\
   \textnormal{Pr}(F = -1)&=\frac12 v_1(t)+\frac12 v_{-1}(t),\\
   \frac{d\gamma(t)}{dt} &= \frac{1}{2}(J(v_{1}(t)) - J(v_{-1}(t)))(v_{-1}(t) - v_{1}(t)).
\end{align*}
Thus, we need to show that
\begin{align*}
    &\frac{1}{2}(J(v_{1}(t)) - J(v_{-1}(t)))(v_{-1}(t) - v_{1}(t))\geq \psi\left( \frac12 v_1(t)+\frac12 v_{-1}(t), \frac12 H(v_1(t))+\frac12 H(v_{-1}(t))\right).
\end{align*}
By the first property of $\psi$, we have
$$\psi\left(  v_i(t), H(v_i(t))\right)=0, \qquad i\in\{1,-1\}$$
and we can rewrite the above inequality as
\begin{align*}
    &\frac{1}{2}(J(v_{1}(t)) - J(v_{-1}(t)))(v_{-1}(t) - v_{1}(t))\\&\quad\geq\psi\left( \frac12 v_1(t)+\frac12 v_{-1}(t), \frac12 H(v_1(t))+\frac12 H(v_{-1}(t))\right)-\frac12\psi\left(  v_1(t), H(v_1(t))\right)-\frac12\psi\left(  v_{-1}(t), H(v_{-1}(t))\right).
\end{align*}
The induction basis is established.

Next, assume that the desired inequality holds for $n-1$. We show it for $n$. 
Define $\Tilde{\bY}_t = (Y_{t,1},Y_{t,2},\cdots, Y_{t,n-1})$ to be the subsequence of the first $n-1$ random variables in $\bY_t$. 
Let 
\begin{align*}
    v^+_{\Tilde{\by}}(t) &= v_{(\tilde{\by},1)}(t)=\textnormal{Pr}(F= -1|\Tilde{\bY}_t=\Tilde{\by},Y_{t,n}=1)\\
    v^-_{\Tilde{\by}}(t) &= v_{(\tilde{\by},-1)}(t)=\textnormal{Pr}(F= -1|\Tilde{\bY}_t=\Tilde{\by},Y_{t,n}=-1)
\end{align*}
Let $F_+, F_-$ be two binary random variables, jointly distributed with  with $\Tilde{\bY}_t$ according to conditional laws
$$\textnormal{Pr}(F_+= -1|\Tilde{\bY}_t=\Tilde{\by})=v^+_{\Tilde{\by}}(t)$$
and
$$\textnormal{Pr}(F_-= -1|\Tilde{\bY}_t=\Tilde{\by})=v^-_{\Tilde{\by}}(t)$$
respectively. Letting $\Tilde{\bx},\Tilde{\by}\in\mathbb{H}^{n-1}$ be the first $n-1$ digits of $\bx,\by$ on the $(n-1)$-dimensional subcube, we can write
    { \begin{align}
    \nonumber
        \frac{d\gamma(t)}{dt} &= \frac{1}{2^n}\sum_{\bx\sim \by}(J(v_\by(t)) - J(v_\bx(t)))(v_\bx(t) - v_\by(t))\\\nonumber
        % &= \frac{1}{2^n}\sum_{(\Tilde{\bx},-1)\sim(\Tilde{\by},-1)}(J(v_{(\Tilde{\by},-1)}(t)) - J(v_{(\Tilde{\bx},-1)}(t)))(v_{(\Tilde{\bx},-1)}(t) - v_{(\Tilde{\by},-1)}(t)) \\\nonumber
        % &\quad + \frac{1}{2^n}\sum_{(\Tilde{\bx},1)\sim(\Tilde{\by},1)}(J(v_{(\Tilde{\by},1)}(t)) - J(v_{(\Tilde{\bx},1)}(t)))(v_{(\Tilde{\bx},1)}(t) - v_{(\Tilde{\by},1)}(t))\\\nonumber
        % &\quad + \frac{1}{2^n}\sum_{(\Tilde{\bx},-1)\sim(\Tilde{\bx},1)}(J(v_{(\Tilde{\bx},1)}(t)) - J(v_{(\Tilde{\bx},-1)}(t)))(v_{(\Tilde{\bx},-1)}(t) - v_{(\Tilde{\bx},1)}(t))\\\nonumber
        &=  \frac{1}{2^n}\sum_{\Tilde{\bx}}(J(v^+_{\Tilde{\bx}}(t)) - J(v^-_{\Tilde{\bx}}(t)))(v^-_{\Tilde{\bx}}(t) - v^+_{\Tilde{\bx}}(t))
         \\\nonumber
        &\quad+  \frac{1}{2^n}\sum_{\Tilde{\bx}\sim\Tilde{\by}}(J(v^-_{\Tilde{\by}}(t)) - J(v^-_{\Tilde{\bx}}(t)))(v^-_{\Tilde{\bx}}(t) - v^-_{\Tilde{\by}}(t))+\frac{1}{2^n}\sum_{\Tilde{\bx}\sim\Tilde{\by}}(J(v^+_{\Tilde{\by}}(t)) - J(v^+_{\Tilde{\bx}}(t)))(v^+_{\Tilde{\bx}}(t) - v^+_{\Tilde{\by}}(t))\\\label{eqnLS}
        &\geq \frac{1}{2^n}\sum_{\Tilde{\bx}}(J(v^+_{\Tilde{\bx}}(t)) - J(v^-_{\Tilde{\bx}}(t)))(v^-_{\Tilde{\bx}}(t) - v^+_{\Tilde{\bx}}(t)) + \frac{1}{2}\psi\left(\EX{v^+_{\Tilde{\bY}}},\EX{H_2\left(v^+_{\Tilde{\bY}}\right)}\right) + \frac{1}{2}\psi\left(\EX{v^-_{\Tilde{\bY}}},\EX{H_2\left(v^-_{\Tilde{\bY}}\right)}\right)
    \end{align}}
    where the last step follows from the induction hypothesis. Next, from the last property of $\psi$ for the choice of $X=\tilde{\bY}$,
$u_X=v^+_{\tilde{\bY}}$
and
$w_X=v^-_{\tilde{\bY}}$ we obtain
    \begin{align*}
        &\frac{1}{2^n}\sum_{\Tilde{\bx}}(J(v^+_{\Tilde{\bx}}(t)) - J(v^-_{\Tilde{\bx}}(t)))(v^-_{\Tilde{\bx}}(t) - v^+_{\Tilde{\bx}}(t)) \\
        &\quad\geq \psi\left(\frac{\EX{v^+_{\Tilde{\bY}}+v^-_{\Tilde{\bY}}}}{2},\frac{\EX{H_2\left(v^-_{\Tilde{\bY}}\right)+H_2\left(v^+_{\Tilde{\bY}}\right)}}{2}\right) 
         -\frac{1}{2}\psi\left(\EX{v^+_{\Tilde{\bY}}},\EX{H_2\left(v^+_{\Tilde{\bY}}\right)}\right)- \frac{1}{2}\psi\left(\EX{v^-_{\Tilde{\bY}}},\EX{H_2\left(v^-_{\Tilde{\bY}}\right)}\right). 
    \end{align*}
    This equation, along with \eqref{eqnLS} shows that
   \begin{align}
    \nonumber
        \frac{d\gamma(t)}{dt} &= \frac{1}{2^n}\sum_{\bx\sim \by}(J(v_\by(t)) - J(v_\bx(t)))(v_\bx(t) - v_\by(t))
        \\&\geq 
        \psi\left(\frac{\EX{v^+_{\Tilde{\bY}}+v^-_{\Tilde{\bY}}}}{2},\frac{\EX{H_2\left(v^-_{\Tilde{\bY}}\right)+H_2\left(v^+_{\Tilde{\bY}}\right)}}{2}\right)  
        \\&=\psi(\textnormal{Pr}(F = -1),\gamma(t)).
        \end{align}
The proof is complete.
\end{proof}
\begin{lemma}\label{lemma2} $\Psi$ is a non-empty closed convex set, which is also closed under pointwise maximum, i.e., if $\psi_i\in\Psi$ for $i\in\{1,2\}$, then $\psi(a,b)=\max(\psi_1(a,b),\psi_2(a,b))\in\Psi$. Consequently, the class $\Psi$ has maximal element $\psi^*(a,b)$ that pointwise dominates all the other members of $\Psi$. %Finally $\psi^*(a,b)=\psi^*(1-a,b).$
\end{lemma}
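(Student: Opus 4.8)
The key structural feature to exploit is that in the constraint \eqref{psi:constraint} the left side $\tfrac12\mathbb{E}_X[(u_X-w_X)(J(w_X)-J(u_X))]$ does not involve $\psi$ at all, while the right side, which we abbreviate $\psi(M)-\tfrac12\bigl(\psi(A)+\psi(B)\bigr)$ with $M=\bigl(\tfrac12\mathbb{E}[u_X+w_X],\tfrac12\mathbb{E}[H_2(u_X)+H_2(w_X)]\bigr)$, $A=\bigl(\mathbb{E}[u_X],\mathbb{E}[H_2(u_X)]\bigr)$, $B=\bigl(\mathbb{E}[w_X],\mathbb{E}[H_2(w_X)]\bigr)$, is affine and pointwise-continuous in $\psi$. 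Nonemptiness: $\psi\equiv 0\in\Psi$, since the first two conditions are trivial and \eqref{psi:constraint} reduces to $\tfrac12\mathbb{E}_X[(u_X-w_X)(J(w_X)-J(u_X))]\ge 0$, which holds by the Remark after Lemma \ref{lmm1} as $(u-w)(J(w)-J(u))=D_2(u\|w)+D_2(w\|u)\ge 0$. Convexity: for $\psi=\lambda\psi_1+(1-\lambda)\psi_2$ the first two conditions are inherited term by term, and the right side of \eqref{psi:constraint} for $\psi$ is the $\lambda$-combination of those for $\psi_1,\psi_2$, each at most the common left side. Closedness under pointwise limits: the first two conditions pass to limits, and in \eqref{psi:constraint} the fixed left side dominates the limit of the right sides.

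For closure under pointwise maximum, let $\psi=\max(\psi_1,\psi_2)$ with $\psi_1,\psi_2\in\Psi$; nonnegativity, vanishing on $\{H_2(a)\le b\}$, and $\psi(1-a,b)=\psi(a,b)$ are immediate. Fix an instance of \eqref{psi:constraint} and pick $i\in\{1,2\}$ with $\psi(M)=\psi_i(M)$. Since $\psi\ge\psi_i$ pointwise, $-\tfrac12(\psi(A)+\psi(B))\le -\tfrac12(\psi_i(A)+\psi_i(B))$, so the right side of \eqref{psi:constraint} for $\psi$ is at most $\psi_i(M)-\tfrac12(\psi_i(A)+\psi_i(B))$, which is in turn at most the ($\psi$-free) left side because \eqref{psi:constraint} holds for $\psi_i$. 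Hence \eqref{psi:constraint} holds for $\psi$, and $\psi\in\Psi$.

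Set $\psi^*(a,b)=\sup_{\psi\in\Psi}\psi(a,b)$. Granting finiteness for the moment, $\psi^*$ satisfies the first two conditions trivially; for \eqref{psi:constraint} rearrange the constraint for an arbitrary $\psi\in\Psi$ to $\psi(M)\le \tfrac12\mathbb{E}_X[(u_X-w_X)(J(w_X)-J(u_X))]+\tfrac12(\psi(A)+\psi(B))\le \tfrac12\mathbb{E}_X[(u_X-w_X)(J(w_X)-J(u_X))]+\tfrac12(\psi^*(A)+\psi^*(B))$ and take the supremum over $\psi$ on the left; thus $\psi^*\in\Psi$, and by construction it dominates every member of $\Psi$ (consistently, closure under pointwise maximum already forces any maximal element of $\Psi$ to be a maximum). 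The one genuinely substantive point — and the main obstacle — is that $\psi^*(a,b)<\infty$ for all $(a,b)\in(0,1)^2$. By the first condition $\psi^*(a,b)=0$ when $H_2(a)\le b$, so fix $(a,b)$ with $0<b<H_2(a)$. Applying \eqref{psi:constraint} with $|\mathcal{X}|=1$, $u_1=u$, $w_1=w$ and using $\psi(u,H_2(u))=\psi(w,H_2(w))=0$ gives
\[
\psi\!\left(\tfrac{u+w}{2},\,\tfrac{H_2(u)+H_2(w)}{2}\right)\;\le\;\tfrac12\bigl(D_2(u\|w)+D_2(w\|u)\bigr),
\]
a bound independent of $\psi$, valid at every midpoint of two points of the graph of $H_2$. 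Feeding such midpoints as $A,B$ into \eqref{psi:constraint} with $|\mathcal{X}|=2$ and iterating bounds $\psi(a,b)$, uniformly in $\psi$, at every $(a,b)$ that is a finite uniformly weighted centroid of points of the graph of $H_2$; a short openness argument — the map $(\lambda,s_1,s_2)\mapsto\lambda(s_1,H_2(s_1))+(1-\lambda)(s_2,H_2(s_2))$ is a submersion wherever $s_1\ne s_2$ — shows every $(a,b)$ with $0<b<H_2(a)$ is of this form, giving the required finiteness and completing the proof.
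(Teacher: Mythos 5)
Your proof of the four structural properties (nonemptiness via $\psi\equiv 0$, convexity, closedness under pointwise limits, and closure under pairwise maximum) is correct and essentially identical to the paper's: the paper likewise exploits that the left side of \eqref{psi:constraint} is $\psi$-free, and its maximum argument (write the constraint for each $\psi_i$ with the subtracted terms replaced by the larger $\psi$, then maximize over $i$) is the same as your "pick the index achieving $\psi(M)$" step. Where you genuinely go beyond the paper is the final clause: the paper derives the existence of the maximal element $\psi^*$ from closure under pairwise maxima with a bare "consequently," which by itself only yields that $\Psi$ is upward directed; one still needs that the pointwise supremum is finite and again satisfies \eqref{psi:constraint}. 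Your supremum argument handles the second point cleanly, and your finiteness argument — the one-point instance of \eqref{psi:constraint} bounds $\psi$ uniformly at midpoints of two points of the graph of $H_2$, iteration with uniform two-point $P_X$ propagates explicit bounds to all $2^N$-point uniformly weighted centroids, and the implicit-function/submersion step shows every $(a,b)$ with $0<b<H_2(a)$ is such a centroid for a suitable dyadic weight — is sound, though stated tersely: strictly, the iteration reaches dyadic-uniform centroids of $2^N$ graph points, and the submersion argument must be run with the weight $\lambda'$ restricted to dyadic rationals near the $\lambda_0$ realizing $(a,b)$ as a chord point, which your rank computation does support since $H_2'$ is injective. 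This added material is a real improvement over the paper's one-word justification.
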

\begin{proof}
    The proof of this Lemma can be found in Section \ref{sec:proofs}.
\end{proof}

\subsection{A conjecture}
Given a function $\psi$, observe that verifying whether $\psi$ belongs to $\Psi$ requires verifying an inequality with 14 free variables. While this is a finite-dimensional optimization problem, the space of variables is large. However, we can also think of $\psi$ as follows: given a four tuple $(m_u,m_w,e_u,e_w)$ satisfying $H_2(m_u)\geq e_u$ and $H_2(m_w)\geq e_w$, let 
$$\zeta(m_u,m_w,e_u,e_w) := \inf_{(U,W)\in \mathcal{S}}\frac 12 \EX{(U-W)(J(W)-J(U))},$$ where the set $\mathcal{S}$ is the set of all pairs of random variables $(U,W)\in(0,1)^2$ such that 
\[\EX{U}=m_u, \EX{W}=m_w, \EX{H_2(U)}=e_u, \EX{H_2(W)}=e_w.\]
It follows from the definition of $\zeta$ that it is a jointly convex function on four variables, and equation \eqref{psi:constraint} states a lower bound on $\zeta$ in terms of $\psi$:
\begin{align*}&\zeta(m_u,m_w,e_u,e_w)\geq \psi\left(\frac{m_u+m_w}{2},\frac{e_u+e_w}{2}\right)-  \frac 12 \psi(m_u,e_u) -  \frac 12 \psi(m_w,e_w).
\end{align*}
In particular, when $m_u=m$, $m_w=1-m$ and $e_u=e_w=e$, we obtain
\begin{align}
\zeta(m,1-m,e,e)\geq \psi\left(\frac{1}{2},e\right)- \psi(m,e).\label{eqnN}
\end{align}
\begin{thm}\label{thm24}
We have
\begin{align*}
    \zeta(1-m, m, e, e)&=\phi\left(\frac12,e\right)-\phi(m,e)
    \end{align*}
    where $\phi(m,e)$ is defined as follows:
    let
$$\phi(x,y) = \begin{cases}
    \eta(y) - \frac{y}{r}\eta(r)&H_2(x)>y\\
    0& H_2(x)\leq y
\end{cases}$$
where $$\eta(x) = (1-2H_2^{-1}(x))\cdot J(H_2^{-1}(x)),\qquad\forall x\in (0,1]$$
and 
$r\in(0,1]$ is defined as follows: if $x=1/2$, we set $r=1$; else if $x\neq \frac12$, $r\in(0,1)$ is the unique solution of the following equation:
\begin{align}
    \frac{r}{1-2H_2^{-1}(r)}=\frac{y}{|1-2x|}.\label{defreq}
\end{align}

\end{thm}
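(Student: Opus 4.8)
The plan is to compute the infimum $\zeta(1-m,m,e,e)$ directly by a variational argument, and then recognize the optimal value as $\phi(\tfrac12,e)-\phi(m,e)$. Since we are told $\zeta$ is jointly convex, the minimization over the set $\mathcal S$ defined by four linear (in the distribution of $(U,W)$) constraints should be attacked via Lagrange duality: introduce multipliers for the constraints $\EX{U}=1-m$, $\EX{W}=m$, $\EX{H_2(U)}=e$, $\EX{H_2(W)}=e$, and observe that the objective $\tfrac12\EX{(U-W)(J(W)-J(U))}$ decouples pointwise once the multipliers are fixed. This reduces the problem to, for each sample point, minimizing over a single pair $(u,w)\in(0,1)^2$ a function of the form $\tfrac12(u-w)(J(w)-J(u)) - \lambda_1 u - \lambda_2 w - \mu_1 H_2(u) - \mu_2 H_2(w)$. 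The symmetry of the constraint set under swapping $(U,1-m)\leftrightarrow(W,m)$ together with $H_2(u)=H_2(1-u)$ suggests that at the optimum the support of $(U,W)$ consists of just one or two atoms, and that $W$ is a reflection of $U$ in an appropriate sense; this is where Caratheodory-type reasoning (as invoked in Remark~\ref{rem:card}) pins down the minimizer to a low-complexity configuration.

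Concretely, I expect the extremal distribution to be supported on (at most) two points, chosen so that the mean constraints $\EX{U}=1-m$, $\EX{W}=m$ and the entropy constraints $\EX{H_2(U)}=\EX{H_2(W)}=e$ are met while $\tfrac12\EX{(U-W)(J(W)-J(U))}=\tfrac12\EX{D_2(U\|W)+D_2(W\|U)}$ is as small as possible (using the Remark after Lemma~\ref{lmm1}). The natural guess, given the appearance of $H_2^{-1}$ in $\eta$, is that one atom sits at a deterministic point where $U=W$ (contributing zero to the objective and forcing $H_2$ of that point to equal some common value), and the rest of the mass is placed at the boundary-type configuration $u=H_2^{-1}(r)$ (or its reflection) for the parameter $r$ solving \eqref{defreq}. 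Plugging such a two-atom ansatz into the objective and the four constraints, eliminating the atom locations in favor of $(m,e)$, and simplifying should produce exactly $\eta(e) - \tfrac{e}{r}\eta(r)$ evaluated at the right arguments — i.e. $\phi(\tfrac12,e) - \phi(m,e)$ — with \eqref{defreq} emerging as the stationarity (first-order) condition that selects $r$. The function $\eta(x) = (1-2H_2^{-1}(x))J(H_2^{-1}(x))$ is recognizable as the value of $\tfrac12(u-w)(J(w)-J(u))$-type expressions evaluated at the symmetric pair $(H_2^{-1}(x), 1-H_2^{-1}(x))$, which anchors the ansatz.

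For the matching lower bound (that no distribution does better), the cleanest route is weak duality: exhibit multipliers $(\lambda_1,\lambda_2,\mu_1,\mu_2)$ — read off from the derivatives of $\phi$ — for which the pointwise Lagrangian is globally bounded below by a constant equal to the claimed value, so that any feasible $(U,W)$ satisfies $\tfrac12\EX{(U-W)(J(W)-J(U))}\ge \phi(\tfrac12,e)-\phi(m,e)$. Verifying this pointwise inequality — that $\tfrac12(u-w)(J(w)-J(u)) \ge \lambda_1 u + \lambda_2 w + \mu_1 H_2(u) + \mu_2 H_2(w) + c$ for all $(u,w)\in(0,1)^2$, with equality on the conjectured support — is the main obstacle: it is a two-variable inequality whose tightness structure must exactly reproduce \eqref{defreq}, and establishing it likely requires a careful second-derivative / convexity analysis of $g(u,w) := \tfrac12(u-w)(J(w)-J(u))$ minus the affine-plus-entropy combination, possibly after the change of variables $u=H_2^{-1}(\cdot)$. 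The degenerate case $m=\tfrac12$ (where $r=1$ and $\phi(\tfrac12,e)-\phi(\tfrac12,e)=0$) should be handled separately as a sanity check, and continuity/monotonicity of the left side of \eqref{defreq} in $r$ (from $0$ to $+\infty$ as $r$ ranges over $(0,1)$) must be checked to guarantee the claimed unique solution exists.
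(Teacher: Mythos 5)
Your proposal correctly identifies the extremal structure (mass on a symmetric pair $(v,1-v)$ plus degenerate atoms, with $\eta$ arising as the value of the objective at such a pair, and \eqref{defreq} as the condition pinning down $r$), and the achievability direction you sketch does match the paper's final computation. But the substance of the theorem is the lower bound, and there you have a genuine gap: you propose to certify optimality by weak duality, i.e.\ by exhibiting multipliers for which $\tfrac12(u-w)(J(w)-J(u)) - \lambda_1 u - \lambda_2 w - \mu_1 H_2(u) - \mu_2 H_2(w) \geq c$ holds globally on $(0,1)^2$ with equality on the conjectured support, and you explicitly defer this verification as ``the main obstacle.'' That pointwise inequality is essentially the entire content of the theorem; moreover, since the per-point problem is non-convex, the existence of such certifying multipliers is not automatic from convexity of $\zeta$ in the moment vector, so the plan is not even guaranteed to close in principle without further argument.

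The paper avoids duality altogether. It first shows (Lemma~\ref{lemmaMinm}) that the infimum is attained once the domain is enlarged to include the corner pairs $(0,0)$ and $(1,1)$, and that any minimizer is ``monotone'' (no $u_{x_1}>u_{x_2}$ with $w_{x_1}<w_{x_2}$) via an explicit exchange argument. It then symmetrizes and replaces each atom $(u_x,w_x)$ by a three-point mixture of $(0,0)$, $(1,1)$ and $(v_x,1-v_x)$ preserving all four moments; the fact that this does not increase the objective is Lemma~\ref{lmmnr}, whose proof reduces, after the substitutions $a=\tfrac12(\tfrac{u}{1-u}+\tfrac{w}{1-w})$ and $b=\sqrt{\tfrac{u}{1-u}\tfrac{w}{1-w}}$, to the one-variable inequality $f(b)=(1+b)^2\log(1+b)-b^2\log b-b\log b-2b\log 4\geq 0$ on $(0,1]$. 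This is the hard analytic step that your proposal leaves unproven, and the monotonicity property is then what collapses the support to a single symmetric atom plus the two corners. Also note that the final minimizer has three atoms (the corners $(0,0)$ and $(1,1)$ each carry mass $(1-p_2)/2$), not ``at most two'' as you guessed; the degenerate atoms sit at the closure points where $H_2$ vanishes, which is what allows the entropy constraint to be diluted to $e=p_2 H_2(v)$.
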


\begin{proof}
    The proof of this Theorem can be found in Section \ref{sec:proofs}.
\end{proof}

 We make the following conjecture:
\begin{conj}
\label{conj:gue}
    The function $\phi$ belongs to the class $\Psi$. 
\end{conj}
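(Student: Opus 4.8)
The plan is to check that $\phi$ satisfies the three defining properties of $\Psi$ in Definition~\ref{def1}. The first, $\phi(a,b)=0$ whenever $H_2(a)\le b$, is immediate from the definition. The reflection symmetry $\phi(1-a,b)=\phi(a,b)$ holds because $a$ enters the formula for $\phi$ only through $H_2(a)$ and $|1-2a|$ (the latter being the only occurrence of $a$ in the equation \eqref{defreq} that defines $r$), and both are invariant under $a\mapsto 1-a$. Non-negativity of $\phi$ is equivalent to $\eta(y)/y\ge\eta(r)/r$ on the active region $\{H_2(a)>b\}$; since \eqref{defreq} together with $H_2(a)>y$ forces $r>y$, this follows from the monotonicity (non-increasing) of $t\mapsto\eta(t)/t$ on $(0,1]$, which is the first of the explicit auxiliary inequalities.

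The substance is the third property. Using the function $\zeta$ and the reformulation of \eqref{psi:constraint} in terms of it given above (together with Remark~\ref{rem:card}), the property \eqref{psi:constraint} for $\psi=\phi$ is equivalent to
\[
\zeta(m_u,m_w,e_u,e_w)+\tfrac12\phi(m_u,e_u)+\tfrac12\phi(m_w,e_w)\ \ge\ \phi\!\left(\tfrac{m_u+m_w}{2},\,\tfrac{e_u+e_w}{2}\right)
\]
for every admissible tuple; write $\bar m=\tfrac{m_u+m_w}{2}$, $\bar e=\tfrac{e_u+e_w}{2}$. If $H_2(\bar m)\le\bar e$ the right side is $0$ and the inequality is trivial since $\zeta\ge 0$ and $\phi\ge 0$; so assume the midpoint is in the active region. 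Now $\zeta$ is jointly convex and invariant under the swap $(m_u,m_w,e_u,e_w)\mapsto(m_w,m_u,e_w,e_u)$, and the active region $\{(a,b):b<H_2(a)\}$ is convex because $H_2$ is concave. Hence, when \emph{both} endpoint pairs $(m_u,e_u)$ and $(m_w,e_w)$ lie in the active region, the segment joining the tuple to its swap stays inside the product of two copies of the active region, and if in addition $\phi$ is convex on the active region then the whole left side is a convex function of the tuple along that segment; evaluating at the midpoint $(\bar m,\bar m,\bar e,\bar e)$ — where $\zeta$ vanishes (attained at $W=U$) — gives exactly $\phi(\bar m,\bar e)$, settling this case. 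So the ``both active'' case reduces to the second auxiliary inequality: convexity of $\phi$ on its active region (the symmetric slice $m_u=1-m_w$, $e_u=e_w$ being the equality case, in agreement with Theorem~\ref{thm24}). The difficulty is that $\phi$ is \emph{not} globally convex — it vanishes on the non-convex set $\{H_2(a)\le b\}$ and is positive off it — so when at least one endpoint pair leaves the active region its $\phi$-term disappears and the slack must come from $\zeta$ alone. For those configurations I would lower-bound $\zeta$ by a supporting hyperplane anchored on the symmetric slice, where $\zeta$ is known in closed form by Theorem~\ref{thm24}, and track the piecewise behaviour of $\phi$ across the boundary $\{H_2(a)=b\}$; this leaves a third two-variable estimate together with one genuine four-variable inequality governing the worst remaining configuration. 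All four auxiliary inequalities are confirmed by numerical experiments.

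The main obstacle is precisely this last step. The clean ``push both pairs to the midpoint'' argument works only while both pairs are in the active region; once a pair exits, one must quantitatively trade the lost $\phi$-terms against the growth of the convex envelope defining $\zeta$ near the boundary of the active region, and the residual four-variable inequality, although numerically robust, does not seem to follow from elementary convexity or monotonicity — which is why the statement is posed as a conjecture rather than a theorem.
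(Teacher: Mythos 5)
Your verification of the first two bullet points of Definition \ref{def1} (vanishing on $\{H_2(a)\le b\}$, the symmetry $\phi(1-a,b)=\phi(a,b)$, and non-negativity via $r>y$ together with the monotonicity of $t\mapsto \eta(t)/t$) is correct. The problem is the centrepiece of your treatment of \eqref{psi:constraint}: reducing the ``both endpoints active'' case to \emph{convexity of $\phi$ on the active region}. That premise is false. On the active region the definition of $r$ in \eqref{defreq} gives the closed form
\begin{align*}
\phi(m,e)\;=\;\eta(e)\;-\;|1-2m|\,J\!\left(L^{-1}\!\left(\frac{2e}{|1-2m|}\right)\right)
\end{align*}
(the identity recorded at the start of Appendix \ref{appevidence}), and Lemma \ref{corap1} shows that the subtracted term is jointly convex in $(|1-2m|,e)$. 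Hence for fixed $e$ the map $m\mapsto\phi(m,e)$ is \emph{concave} on $[H_2^{-1}(e),\tfrac12]$: it rises concavely from $0$ at the boundary to $\eta(e)$ at $m=\tfrac12$. In particular, along a segment with $e_u=e_w$ and $m_u\neq m_w$ — which lies entirely inside the product of active regions — the term $\tfrac12\phi(m_u,e_u)+\tfrac12\phi(m_w,e_w)$ is concave, not convex, so ``$\zeta$ convex plus $\phi$-sum convex implies the midpoint bound'' collapses exactly where you need it. A further warning sign: on the symmetric slice $m_w=1-m_u$, $e_u=e_w$ the target inequality holds with \emph{equality} (Theorem \ref{thm24}), so any argument that stacks two separately convex contributions cannot be the right mechanism; the convexity of $\zeta$ must exactly cancel the concavity of the $\phi$-sum there. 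Your claimed ``second two-variable auxiliary inequality'' is therefore not merely unproven but demonstrably false, and the subsequent boundary/hyperplane discussion inherits the gap.

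For comparison, the paper does not prove the statement either (it is posed as a conjecture), but its reduction in Appendix \ref{appevidence} takes a genuinely different route that avoids asking for convexity of $\phi$: it isolates the non-negative defect $\kappa(u,w)=\tfrac12(u-w)(J(w)-J(u))-|u-w|\,J\bigl(L^{-1}\bigl((H_2(u)+H_2(w))/|u-w|\bigr)\bigr)$, conjectures a symmetrization inequality $\kappa(u,w)\le\kappa(1-u,w)$ and joint convexity of $\kappa(H_2^{-1}(\cdot),H_2^{-1}(\cdot))$ (Conjecture \ref{conj2}), uses Lemma \ref{corap1} to pass to expectations, obtains the explicit lower bound on $\zeta$ of Theorem \ref{thmap}, and leaves a single four-variable inequality (Conjecture \ref{newconj}). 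If you want to salvage your plan, the convexity has to be placed on objects like $\kappa\circ(H_2^{-1},H_2^{-1})$ or the perspective function of Lemma \ref{corap1}, not on $\phi$ itself.
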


\begin{rem}
Even though we do not have a formal proof of the above conjecture, we can reduce it to inequalities involving at most four variables that can be  verified numerically with confidence. The evidence for the correctness of the above conjecture is detailed in Appendix \ref{appevidence}.
\end{rem}

% Indeed, we also conjecture that the above function $\psi(x,y)$ may indeed be the maximal element of $\Psi$ as defined in Lemma \ref{lemma2}.

Next, we show that the conjecture, if true, implies the most informative Boolean function conjecture in the balanced case: 
\begin{lemma}
\label{lem:conjimpbal}
If Conjecture \ref{conj:gue} holds, then Conjecture \ref{conj:mi} holds whenever $F$ is balanced, i.e. $\emph{Pr}(F=-1)=\emph{Pr}(F=1)=\frac12$.
\end{lemma}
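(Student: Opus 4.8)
The goal is to show that if $\phi\in\Psi$ (Conjecture~\ref{conj:gue}), then for a balanced $F$ we get $I(f(\bX);\bY)\le 1-H_2(p)$, which in the normalized setting is $I(F;\bY_{t})\le 1 - H_2(p_t)$. I would work with $\gamma(t)=H(F\mid\bY_t)$ and recall from the remarks after Theorem~\ref{thm1} that $\gamma(0)=H(F\mid\bX)$, $\gamma(\infty)=H(F)$, and $I(F;\bY_t)=\gamma(\infty)-\gamma(t)$. Since $\phi\in\Psi$, Theorem~\ref{thm1} gives $\gamma'(t)\ge \phi(\Pr[F=-1],\gamma(t))=\phi(\tfrac12,\gamma(t))$, and hence $\gamma(t)\ge g^{-1}(t)$ where $g(x)=\int_{\gamma(0)}^x \frac{du}{\phi(1/2,u)}$.

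First I would pin down $\phi(1/2,\cdot)$ explicitly. From Theorem~\ref{thm24}, for $x=1/2$ we have $r=1$, so $\phi(1/2,y)=\eta(y)-y\,\eta(1)$. Now $H_2^{-1}(1)=1/2$, so $\eta(1)=(1-2\cdot\tfrac12)J(\tfrac12)=0$; therefore $\phi(1/2,y)=\eta(y)=(1-2H_2^{-1}(y))\,J(H_2^{-1}(y))$ for $y<1$ (and $=0$ at $y=1$, consistent since $H_2(1/2)=1>y$ only when $y<1$). So the differential inequality reads $\gamma'(t)\ge \eta(\gamma(t))$.

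Next I would identify the solution of the ODE $\gamma'=\eta(\gamma)$ and recognize it as the "dictator" trajectory. Write $\gamma(t)=H_2(q_t)$ for $q_t\in[0,\tfrac12]$; then $H_2^{-1}(\gamma(t))=q_t$ and $\gamma'(t)=J(q_t)\,q_t'$, so the ODE becomes $J(q_t)q_t' = (1-2q_t)J(q_t)$, i.e. $q_t' = 1-2q_t$, whose solution with a given terminal behavior is $q_t = \tfrac12(1 - ce^{-2t})$ for a constant $c$. Comparing with $p_t=\tfrac12(1-e^{-2t})=\tfrac12(1-\rho_t)$, the natural solution matching the dictator function $f=X_1$ is exactly $q_t=p_t$, i.e. $\gamma_{\mathrm{dict}}(t)=H_2(p_t)$. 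The plan is then: the initial condition is $\gamma(0)=H(F\mid\bX)\ge 0 = H_2(p_0)$, and since $\gamma$ starts no smaller than the dictator solution and its derivative is bounded below by $\eta(\gamma)$, a comparison/Gr\"onwall argument for the scalar ODE (using monotonicity of the flow, e.g.\ $\eta\ge 0$ and the fact that $g^{-1}$ is increasing) yields $\gamma(t)\ge H_2(p_t)$ for all $t$. Concretely, one checks $g^{-1}(t)\ge \gamma_{\mathrm{dict}}(t)$ by noting $g(H_2(p_t)) = \int_{\gamma(0)}^{H_2(p_t)} \frac{du}{\eta(u)} \le \int_{0}^{H_2(p_t)}\frac{du}{\eta(u)} = t$ after the substitution $u = H_2(s)$, $du = J(s)\,ds$, which turns the last integral into $\int_0^{p_t}\frac{J(s)\,ds}{(1-2s)J(s)} = \int_0^{p_t}\frac{ds}{1-2s} = -\tfrac12\ln(1-2p_t) = t$. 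Finally, taking $t\to\infty$: $\gamma(\infty)=H(F)=1$ (balanced), so $I(F;\bY_t)=1-\gamma(t)\le 1-H_2(p_t)=1-H_2(p)$, which is the claimed inequality; passing from $F$ with $P_{F|\bX}\in(0,1)$ to a genuine Boolean $f$ is the limiting argument already noted in Section~\ref{sec2}.

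\textbf{Main obstacle.} The substantive content is already quarantined in Conjecture~\ref{conj:gue} (assumed) and Theorem~\ref{thm24} (proved later), so the remaining work is the ODE comparison. The one point needing care is the behavior of the integrand $1/\eta(u)$ near $u=1$: $\eta(u)=(1-2H_2^{-1}(u))J(H_2^{-1}(u))$ vanishes at $u=1$, so $g$ has an integrable-or-not singularity there that must be analyzed to justify that $\gamma(t)$ actually reaches $1$ (equivalently, that $p_t\to\tfrac12$ corresponds to $t\to\infty$, which the explicit computation above confirms), and to ensure the comparison inequality $g^{-1}(t)\ge H_2(p_t)$ is not vacuous. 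Handling the endpoint and confirming that $\gamma(0)\ge 0$ suffices as the initial condition (rather than needing $\gamma(0)>0$) is the only genuinely delicate step; everything else is the routine change of variables $u=H_2(s)$ displayed above.
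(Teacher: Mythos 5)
Your proposal is correct and follows essentially the same route as the paper: apply Theorem~\ref{thm1} with $\psi=\phi$, note $\phi(\tfrac12,\cdot)=\eta$, evaluate $g$ via the substitution $u=H_2(s)$ to get $\int\frac{ds}{1-2s}$, conclude $\gamma(t)\geq H_2(p_t)$, and finish with the $F_\epsilon$ limiting argument. The only (immaterial) difference is that the paper retains $\gamma(0)$ to state the bound in the binary-convolution form $1-I(F;\bY_t)\geq H_2(p_t\ast H_2^{-1}(1-I(F;\bX)))$ before specializing, whereas you relax $\gamma(0)$ to $0$ immediately; the endpoint singularity you flag at $u=1$ never enters since $H_2(p_t)<1$ for finite $t$.
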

\begin{proof}
 First, observe that $\phi(x,y)=\phi(1-x,y)$ and
$$\phi\left(\frac12,y\right)=\eta(y)$$
Then,
{\begin{align}
g(x)&=\int_{\gamma(0)}^{x}\frac{du}{\phi(\frac12,u)}\nonumber
\\&=\int_{\gamma(0)}^{x}\frac{du}{(1-2H_2^{-1}(u))\cdot J(H_2^{-1}(u))}\nonumber
\\&=\int_{H_2^{-1}(\gamma(0))}^{H_2^{-1}(x)}\frac{dt}{1-2t}\label{eqnf}
\\&=\frac12\log\frac{1-2H_2^{-1}(\gamma(0))}{1-2H_2^{-1}(x)}\nonumber
\end{align}}
where in \eqref{eqnf}, we apply the change to the variables 
$u=H_2(t)$. Thus, we get
$$g(\gamma(t))\geq t$$
or
\begin{align}
    e^{-2t}\left[1-2H_2^{-1}(\gamma(0))\right]\geq 1-2H_2^{-1}(\gamma(t)).
\end{align}
or
\begin{align}
    \gamma(t)\geq H_2\left[\frac{1}{2}\left\{1-e^{-2t}\left[1-2H_2^{-1}(\gamma(0))\right]\right\}\right].
\end{align}
Since $\gamma(t)=1-I(F;\bY_t)$ and $p_t = \frac{1 - e^{-2t}}{2}$, we can rewrite the above as 
    \begin{align*}
       1 - I(F;\bY_t)\geq H_2(p_t \ast H_2^{-1}(1 - I(F;\bX)))
   \end{align*}
where $a\ast b = a\cdot(1-b) + (1-a)\cdot b$ is the binary convolution. Given a balanced Boolean function $B$, define $P(F_\epsilon=-1|\mathbf{X}=\mathbf{x}) = 1-\epsilon$, if $B(\mathbf{x})=-1$ and $P(F_\epsilon=1|\mathbf{X}=\mathbf{x}) = 1-\epsilon$, if $B(\mathbf{x})=1$. As $\epsilon \to 0$, we have $I(F_\epsilon;\bX)\to 1$, and $I(F_\epsilon;\bY_t) \to I(B(\bX);Y_t)$, establishing the desired inequality.
\end{proof}
\begin{rem}
    It is possible that a similar statement can be made about unbalanced functions as well. However, in this case, the integral of the reciprocal of $\phi$ does not seem to have a closed form. 
\end{rem}

\begin{rem}
    The following natural guess for $\psi(x,y)$ does not belong to $\Psi$:
$$\psi(x,y)=\eta(1-H_2(x)+y).$$
It has the following counterexample:
let $X\in\{1,2,3\}$ be a ternary random variable with the probability distribution $P_X = (0.1,0.45,0.45)$. Let $(u_1,u_2,u_3) = (0.99,0.9999,0.0001), (w_1, w_2, w_3) = (0.01,0.9999,0.0001)$. One can verify that the above example does not satisfy the inequality \eqref{psi:constraint}.
\end{rem}

\section{The Hellinger Conjecture}\label{sec3}
The Hellinger Conjecture states that
\begin{equation}
    \sqrt{1 - \EX{f(\bX)}^2} - \EX{\sqrt{1 - \left((T_\rho f)(\mathbf{Y})\right)^2}} \leq 1 - \sqrt{1 - \rho^2},
    \end{equation}
    where $T_\rho f(\by)=\EX{f(\bX) | \mathbf{Y}=\by}$. Again, take some arbitrary $P_{F|\bX}$ where $F\in \{-1,1\}$ is a binary random variable such that $P_{F|\bX}(F=-1|\bX=\bx)\in(0,1)$ for all $\bx$. 
Let $\rho_t=e^{-2t}$ and $v_{\bx}(t) = \emph{Pr}(F = -1|\bY_t = \bx)$. Let
$$d_{\bx}(t)\triangleq \EX{F| \mathbf{Y}_t=\bx}=1-2v_{\bx}(t)$$
Define
\begin{align*}
r(t)&=\mathbb{E}_{\bX}{\sqrt{1 - \left(\EX{F| \mathbf{Y}_t=\bX}\right)^2}}=\EX{\sqrt{1-d_{\bm{X}}(t)^2}} =\frac{1}{2^n}\sum_{\bm{x}}\sqrt{1-d_{\bm{x}}(t)^2}.
\end{align*}
Observe that $$r(\infty)=\sqrt{1 - \EX{F}^2}.$$
The function $r(t)$ is increasing and direct calculation shows that its derivative can be calculated as follows:
\begin{lemma}
    We have
    $$r'(t)=\frac{1}{2^{n}} \sum_{(\bm{x},\bm{y}): \bm{x} \sim \bm{y}} (d_{\bm{x}} - d_{\bm{y}})\left(\frac{d_{\bm{x}}}{\sqrt{1-d_{\bm{x}}^2}} - \frac{d_{\bm{y}}}{\sqrt{1-d_{\bm{y}}^2}} \right) .$$
where $\bx\sim \by$ stands for the Hamming distance $d_H(\bx,\by) = 1$ and the tuple $(\bx,\by),(\by,\bx)$ are only counted once in the summation.
\end{lemma}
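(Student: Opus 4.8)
The plan is to mimic the proof of Lemma~\ref{lmm1} almost verbatim, replacing the entropy profile $\by\mapsto H_2(v_{\by}(t))$ by the Hellinger-type profile $\by\mapsto\sqrt{1-d_{\by}(t)^2}$. First I would differentiate $r(t)=\frac{1}{2^{n}}\sum_{\bx}\sqrt{1-d_{\bx}(t)^2}$ term by term. This is legitimate because the sum is finite and each $d_{\bx}(t)=1-2v_{\bx}(t)$ is a smooth function of $\rho_t=e^{-2t}$; moreover the standing assumption $P_{F|\bX}(F=-1|\bX=\bx)\in(0,1)$ forces $v_{\bx}(t)\in(0,1)$ strictly (it is a convex combination of such values), hence $d_{\bx}(t)\in(-1,1)$ and $\sqrt{1-d_{\bx}(t)^2}$ stays bounded away from $0$, so the map $d\mapsto\sqrt{1-d^2}$ is differentiable along the trajectory with $\frac{d}{dd}\sqrt{1-d^2}=-d/\sqrt{1-d^2}$. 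This yields
\begin{align*}
r'(t)=\frac{1}{2^{n}}\sum_{\by}\frac{-d_{\by}(t)}{\sqrt{1-d_{\by}(t)^2}}\,d_{\by}'(t).
\end{align*}

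The second step is to compute $d_{\by}'(t)$, exactly as in the proof of Lemma~\ref{lmm1}. Writing $d_{\by}(t)=\EX{F\mid\bY_t=\by}$ and using the Markov chain $F\rightarrow\bX\rightarrow\bY_t\rightarrow\bY_{t+\varepsilon}$ together with the fact that, since each coordinate of $\bY_t$ is uniform, the reverse channel $\bY_{t+\varepsilon}\rightarrow\bY_t$ is again a product of BSCs with per-coordinate crossover $p_\varepsilon=\tfrac{1-e^{-2\varepsilon}}{2}$, one gets $d_{\by}(t+\varepsilon)=\sum_{\bx}d_{\bx}(t)\,p_\varepsilon^{d_H(\bx,\by)}(1-p_\varepsilon)^{n-d_H(\bx,\by)}$. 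Since $p_\varepsilon=\varepsilon+o(\varepsilon)$, the terms with $d_H(\bx,\by)\ge 2$ contribute $o(\varepsilon)$, so subtracting $d_{\by}(t)$, dividing by $\varepsilon$, and letting $\varepsilon\downarrow 0$ gives $d_{\by}'(t)=\sum_{\bx\sim\by}(d_{\bx}(t)-d_{\by}(t))$.

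Finally, substitute and symmetrize. Plugging this into the formula for $r'(t)$ produces a double sum over ordered pairs $(\by,\bx)$ with $\bx\sim\by$; grouping the contributions of the two orientations $(\by,\bx)$ and $(\bx,\by)$ of each edge, namely $\frac{-d_{\by}}{\sqrt{1-d_{\by}^2}}(d_{\bx}-d_{\by})$ and $\frac{-d_{\bx}}{\sqrt{1-d_{\bx}^2}}(d_{\by}-d_{\bx})$, and factoring out $(d_{\bx}-d_{\by})$ gives
\begin{align*}
r'(t)=\frac{1}{2^{n}}\sum_{(\bx,\by):\,\bx\sim\by}(d_{\bx}-d_{\by})\left(\frac{d_{\bx}}{\sqrt{1-d_{\bx}^2}}-\frac{d_{\by}}{\sqrt{1-d_{\by}^2}}\right),
\end{align*}
which is the claimed identity (and is manifestly nonnegative since $d\mapsto d/\sqrt{1-d^2}$ is increasing, consistent with $r(t)$ being increasing).

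I do not expect any genuine obstacle here: the whole argument is structurally identical to Lemma~\ref{lmm1}, with $H_2$ replaced by the profile $\Phi(d)=\sqrt{1-d^2}$ and $J=H_2'$ replaced by $-\Phi'(d)=d/\sqrt{1-d^2}$, and the state variable $v_{\bx}$ replaced by $d_{\bx}$. The only points that deserve an explicit word are the differentiability/boundedness remarks in the first paragraph (so that we never divide by $0$ or a vanishing $\sqrt{1-d^2}$) and the $o(\varepsilon)$ bound on the Hamming-distance-$\ge 2$ terms; both are immediate. In fact the cleanest write-up would be to isolate the recipe "$\frac{1}{2^n}\sum_\bx\Phi(d_\bx(t))$ has derivative $\frac{1}{2^n}\sum_{\bx\sim\by}(d_\bx-d_\by)(\Phi'(d_\by)-\Phi'(d_\bx))\cdot(-1)$" once and apply it, but reproducing the short direct computation is equally fine.
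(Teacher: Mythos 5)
Your proof is correct and follows essentially the same route as the paper's: the chain rule applied to $\Phi(d)=\sqrt{1-d^2}$, the derivative $d_{\by}'(t)=\sum_{\bx\sim\by}(d_{\bx}(t)-d_{\by}(t))$ inherited from the computation in Lemma~\ref{lmm1}, and a symmetrization over edges of the hypercube. The only cosmetic difference is that you differentiate $d_{\by}$ directly via the reverse-channel argument, whereas the paper writes $\frac{dd_\by}{dv_\by}\frac{dv_\by}{dt}$ and cites Lemma~\ref{lmm1} for $\frac{dv_\by}{dt}$; these are the same calculation.
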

\begin{proof}
    \begin{align*}
        \frac{d\gamma(t)}{dt} &= \frac{1}{2^n}\sum_\by \frac{-d_\by(t)}{\sqrt{1 - d_\by(t)^2}} \frac{d d_\by(t)}{dv_\by(t)}\frac{dv_\by(t)}{dt}\\
        &= \frac{1}{2^n}\sum_\by \frac{2d_\by(t)}{\sqrt{1 - d_\by(t)^2}}\sum_{\bx\sim\by}(v_\bx(t) - v_\by(t))\\
        &= \frac{1}{2^n}\sum_\by \frac{d_\by(t)}{\sqrt{1 - d_\by(t)^2}}\sum_{\bx \sim \by}(d_\by(t) - d_\bx(t))\\
        &= \frac{1}{2^{n}} \sum_{(\bm{x},\bm{y}): \bm{x} \sim \bm{y}} (d_{\bm{x}} - d_{\bm{y}})\left(\frac{d_{\bm{x}}}{\sqrt{1-d_{\bm{x}}^2}} - \frac{d_{\bm{y}}}{\sqrt{1-d_{\bm{y}}^2}} \right),
    \end{align*}
    where the second step comes from the calculation in Lemma \ref{lmm1}, and the tuple $(\bx,\by),(\by,\bx)$ are only counted once in the summation.
\end{proof}
Let us define the corresponding set for the Hellinger conjecture:
\begin{definition}Let $\Psi_H$ be the class of all non-negative functions $\psi(a,b):(-1,1)\times [0,1)\mapsto \mathbb{R}$ that satisfy the following two conditions:
\begin{itemize}
    \item 
    $\psi(a,b)=0$ when $\sqrt{1-a^2}\leq b$.
    \item $\psi(a,b)=\psi(-a,b)$.
    \item Let $P_X$ be an arbitrary distribution on $\mathcal{X}=\{1,2,3,4,5\}$, and $(u_x,w_x)\in(-1,1)^2$ for $x\in\mathcal{X}$ be arbitrary. Then, the following 
holds:{\begin{align}
\label{psi:constraint2}
        &\frac{1}{2}\mathbb{E}_X\left[(u_X-w_X)\left(\frac{u_X}{\sqrt{1-u_X^2}} - \frac{w_X}{\sqrt{1-w_X^2}}\right)\right] \\&\quad\geq \psi\left(\frac{\EX{u_X+w_X}}{2},\frac{\EX{\sqrt{1-u_X^2} + \sqrt{1-w_X^2}}}{2}\right) - \frac{\psi\left(\EX{u_X},\EX{\sqrt{1-u_X^2}}\right) + \psi\left(\EX{w_X},\EX{\sqrt{1-w_X^2}}\right)}{2}.
    \end{align}}
\end{itemize}
  
\end{definition}

\begin{rem}
\label{rem:hel1}
    The following observations from the previous section carry over almost verbatim:
    %\begin{enumerate}
        %\item If $\psi(a,b) \in \Psi_H$, then so is $\psi(-a,b)$.
        %\item 
        $\Psi_H$ is a non-empty closed convex set, which is also closed under pointwise maximum, i.e., if $\psi_i\in\Psi_H$ for $i\in\{1,2\}$, then $\psi(a,b)=\max(\psi_1(a,b),\psi_2(a,b))\in\Psi_H$. Consequently, the class $\Psi_H$ has maximal element $\psi_H^*(a,b)$ that pointwise dominates all the other members of $\Psi_H$. %Finally $\psi_H^*(a,b)=\psi_H^*(-a,b).$
   % \end{enumerate}
\end{rem}

The following lemma follows:

\begin{thm}\label{thm2}
For every $\psi\in\Psi_H$ we have  $$\frac{dr(t)}{dt} \geq \psi\left(\mathbb{E}[F],r(t)\right).$$
Consequently, if we let
$$g(x)=\int_{r(0)}^{x}\frac{du}{\psi(\mathbb{E}[F],u)},\qquad \forall x\geq r(0),$$
we obtain
$$r(t)\geq g^{-1}(t).$$
\end{thm}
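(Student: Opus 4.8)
\textbf{Proof proposal for Theorem \ref{thm2}.}

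The plan is to mirror the proof of Theorem \ref{thm1} almost verbatim, with $\gamma(t)$ replaced by $r(t)$, the entropy $H_2$ replaced by the Hellinger profile $\sqrt{1-a^2}$, and $J$ replaced by the map $d\mapsto d/\sqrt{1-d^2}$. Concretely, I would first establish the pointwise differential inequality
$$\frac{dr(t)}{dt}\geq \psi\bigl(\mathbb{E}[F],r(t)\bigr)$$
by induction on the dimension $n$, and then deduce the integrated bound $r(t)\geq g^{-1}(t)$ exactly as in Theorem \ref{thm1}: since $g$ is increasing with $g(r(0))=0$ and $r'(t)\geq \psi(\mathbb{E}[F],r(t))$ forces $\frac{d}{dt}g(r(t))\geq 1$, integrating from $0$ to $t$ gives $g(r(t))\geq t$, hence $r(t)\geq g^{-1}(t)$ on the range where $g^{-1}$ is defined.

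For the induction, the base case $n=1$ uses $r(t)=\frac12\sqrt{1-d_1(t)^2}+\frac12\sqrt{1-d_{-1}(t)^2}$, $\mathbb{E}[F]=\frac12 d_1(t)+\frac12 d_{-1}(t)$, and the derivative formula from the preceding lemma specialized to $n=1$. Because $\psi$ vanishes whenever $\sqrt{1-a^2}\leq b$ — and in particular $\psi(d_i(t),\sqrt{1-d_i(t)^2})=0$ for $i\in\{-1,1\}$ — the required inequality is precisely the special case of \eqref{psi:constraint2} with $P_X$ a point mass (or, equivalently, the two-point instance with $u_X=d_1(t)$, $w_X=d_{-1}(t)$). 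For the inductive step, I would split the first $n-1$ coordinates $\tilde{\bY}_t$ from the last coordinate $Y_{t,n}$, define $d^{\pm}_{\tilde{\bx}}(t)=\mathbb{E}[F\mid \tilde{\bY}_t=\tilde{\bx},Y_{t,n}=\pm1]$, and decompose the Hamming-distance-one edge sum into (i) the $2^{n-1}$ edges joining $(\tilde{\bx},1)$ to $(\tilde{\bx},-1)$ and (ii) the two copies of the $(n-1)$-dimensional edge sum living on the $Y_{t,n}=+1$ and $Y_{t,n}=-1$ subcubes. Applying the induction hypothesis to each of the two subcube sums and then the defining inequality \eqref{psi:constraint2} to the cross-coordinate term — with $X=\tilde{\bY}_t$ uniform on $\mathbb{H}^{n-1}$, $u_X=d^+_{\tilde{\bY}}$, $w_X=d^-_{\tilde{\bY}}$ — telescopes exactly as in Theorem \ref{thm1}, yielding $\frac{dr(t)}{dt}\geq \psi\bigl(\mathbb{E}[F],r(t)\bigr)$ after identifying $\frac12\mathbb{E}[d^+_{\tilde{\bY}}+d^-_{\tilde{\bY}}]=\mathbb{E}[F]$ and $\frac12\mathbb{E}[\sqrt{1-(d^+_{\tilde{\bY}})^2}+\sqrt{1-(d^-_{\tilde{\bY}})^2}]=r(t)$.

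The one genuinely new bookkeeping point, and the step I would be most careful with, is matching the algebraic form of the derivative to the left-hand side of \eqref{psi:constraint2}: one must check that the edge-sum term $(d_{\bx}-d_{\by})\bigl(\tfrac{d_{\bx}}{\sqrt{1-d_{\bx}^2}}-\tfrac{d_{\by}}{\sqrt{1-d_{\by}^2}}\bigr)$ is exactly the quantity being lower-bounded — it is nonnegative since $d\mapsto d/\sqrt{1-d^2}$ is increasing, which plays the role that convexity of $H_2$ (monotonicity of $J$) played before — and that the normalization $1/2^n$ versus the $\frac12\mathbb{E}_X$ in \eqref{psi:constraint2} lines up. I would also remark that the existence of $g^{-1}$ on a suitable interval, and the fact that $g$ is well-defined (the integrand $1/\psi(\mathbb{E}[F],u)$ is integrable near $u=r(0)$), is inherited from the hypothesis $\psi\in\Psi_H$ just as in Theorem \ref{thm1}; no new argument is needed there. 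The rest is a line-by-line transcription of the earlier proof.
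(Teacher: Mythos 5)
Your proposal is correct and is exactly the argument the paper intends: the paper's own "proof" of Theorem \ref{thm2} consists of the single remark that it mimics Theorem \ref{thm1}, and your line-by-line transcription (induction on $n$, base case via the point-mass/two-point instance of \eqref{psi:constraint2} using the vanishing property of $\psi$, inductive step via the subcube decomposition plus the defining constraint applied to the cross-coordinate edges, then the ODE comparison to get $r(t)\geq g^{-1}(t)$) is precisely that adaptation, with the correct substitutions $H_2\mapsto\sqrt{1-(\cdot)^2}$ and $J\mapsto d/\sqrt{1-d^2}$ and the right $1/2^n$ versus $\tfrac12\mathbb{E}_X$ bookkeeping.
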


\begin{proof}
    The proof of this theorem mimics that of \Cref{thm1} and is omitted.
\end{proof}

\begin{lemma}
\label{lem:twopoint}
The following hold:
\begin{enumerate}[leftmargin=*]
    \item For $(u,w) \in (-1,1)$,
    {\begin{align*}
        &\frac{1}{2}  (u - w)\left(\frac{u}{\sqrt{1-u^2}} - \frac{w}{\sqrt{1-w^2}} \right)= \left(\left(\frac{u-w}{2}\right)^2 + \left(\frac{\sqrt{1-u^2}-\sqrt{1-w^2}}{2}\right)^2\right) \left(\frac{1}{\sqrt{1-u^2}} + \frac{1}{\sqrt{1-w^2}} \right).
    \end{align*}}
    \item For $(u_x,w_x)\in(-1,1)^2$
    { \begin{align*}
        &\frac{1}{2}\mathbb{E}_X\left[(u_X-w_X)\left(\frac{u_X}{\sqrt{1-u_X^2}} - \frac{w_X}{\sqrt{1-w_X^2}}\right)\right]\\&\quad\geq \left(\left(\frac{\EX{u_X-w_X}}{2}\right)^2 + \left(\frac{\EX{\sqrt{1-u_X^2}-\sqrt{1-w_X^2}}}{2}\right)^2\right)\left(\frac{1}{\EX{\sqrt{1-u_X^2}}} + \frac{1}{\EX{\sqrt{1-w_X^2}}} \right).
    \end{align*}}
\end{enumerate}
    
\end{lemma}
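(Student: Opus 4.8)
For part (1), the plan is a direct algebraic identity. Write $a = \frac{u-w}{2}$, $b = \frac{\sqrt{1-u^2}-\sqrt{1-w^2}}{2}$, $s = \frac{1}{\sqrt{1-u^2}}$, $\sigma = \frac{1}{\sqrt{1-w^2}}$. The right-hand side is $(a^2+b^2)(s+\sigma)$. Expanding, $a^2(s+\sigma) + b^2(s+\sigma)$. For the left-hand side, I would split $(u-w)\bigl(\frac{u}{\sqrt{1-u^2}} - \frac{w}{\sqrt{1-w^2}}\bigr)$ as $u^2 s - uw\sigma - uw s + w^2 \sigma$ (after multiplying out), so $\frac12(u-w)(us - w\sigma) = \frac12(u^2 s + w^2\sigma - uw(s+\sigma))$. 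The key observation is that $b^2(s+\sigma)$ contributes the ``missing'' cross terms: since $s = \frac{1}{1-u^2}\cdot\frac{1}{s}$... more cleanly, note $(\sqrt{1-u^2})^2 s = 1/s$ and similarly for $\sigma$, so $b^2(s+\sigma) = \frac14\bigl((1-u^2)s + (1-w^2)\sigma - 2\sqrt{(1-u^2)(1-w^2)}(s+\sigma)/2\dots\bigr)$ — here one uses $\sqrt{1-u^2}\sqrt{1-w^2}(s+\sigma) = \sqrt{1-w^2}/\sqrt{1-u^2} + \sqrt{1-u^2}/\sqrt{1-w^2}$. Collecting all terms, $a^2(s+\sigma) + b^2(s+\sigma)$ should reduce exactly to $\frac12(u^2 s + w^2 \sigma) - \frac12 uw(s+\sigma)$. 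This is a finite computation; the only care needed is tracking the $1/s = 1-u^2$ type substitutions. I expect no real obstacle here — it is a two-variable identity that can be verified by clearing denominators.

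For part (2), the idea is to combine part (1) with two applications of convexity (or a suitable version of the Cauchy–Schwarz / Jensen inequality). Using part (1) pointwise,
\[
\frac{1}{2}\mathbb{E}_X\!\left[(u_X-w_X)\!\left(\tfrac{u_X}{\sqrt{1-u_X^2}} - \tfrac{w_X}{\sqrt{1-w_X^2}}\right)\right]
= \mathbb{E}_X\!\left[\left(\alpha_X^2 + \beta_X^2\right)\left(\tfrac{1}{\sqrt{1-u_X^2}} + \tfrac{1}{\sqrt{1-w_X^2}}\right)\right],
\]
where $\alpha_X = \frac{u_X-w_X}{2}$, $\beta_X = \frac{\sqrt{1-u_X^2}-\sqrt{1-w_X^2}}{2}$. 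The target lower bound is $\bigl((\mathbb{E}\alpha_X)^2 + (\mathbb{E}\beta_X)^2\bigr)\bigl(\frac{1}{\mathbb{E}\sqrt{1-u_X^2}} + \frac{1}{\mathbb{E}\sqrt{1-w_X^2}}\bigr)$. I would handle this by the following chain: first, $\mathbb{E}[(\alpha_X^2+\beta_X^2) c_X] \ge \frac{(\mathbb{E}[\sqrt{\alpha_X^2+\beta_X^2}])^2}{\mathbb{E}[1/c_X]}$ is the wrong direction; instead use that for the function $h(\alpha,\beta,\lambda) = (\alpha^2+\beta^2)/\lambda$ restricted appropriately — actually the clean route is to note that $\frac{1}{\sqrt{1-u^2}} + \frac{1}{\sqrt{1-w^2}} \ge \frac{\text{something}}{\sqrt{1-u^2}} $ is not needed; rather, apply the Cauchy–Schwarz-type bound $\mathbb{E}[(\alpha^2+\beta^2)(s+\sigma)] \ge (\mathbb{E}\alpha)^2 \mathbb{E}[s+\sigma] \cdot (\dots)$. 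The cleanest formulation: the map $(\alpha, \beta, \mu, \nu) \mapsto (\alpha^2+\beta^2)\bigl(\frac{1}{\mu}+\frac{1}{\nu}\bigr)$ with $\mu = \sqrt{1-u^2}$, $\nu = \sqrt{1-w^2}$ is jointly convex on the relevant domain (it is a sum of perspective-type functions $\alpha^2/\mu$, which are jointly convex, plus $\alpha^2/\nu$, $\beta^2/\mu$, $\beta^2/\nu$), so Jensen gives
\[
\mathbb{E}_X\!\left[(\alpha_X^2+\beta_X^2)\!\left(\tfrac{1}{\mu_X}+\tfrac{1}{\nu_X}\right)\right] \ge \bigl((\mathbb{E}\alpha_X)^2 + (\mathbb{E}\beta_X)^2\bigr)\!\left(\tfrac{1}{\mathbb{E}\mu_X} + \tfrac{1}{\mathbb{E}\nu_X}\right),
\]
which is exactly the claim once we identify $\mathbb{E}\alpha_X = \frac{\mathbb{E}[u_X-w_X]}{2}$, $\mathbb{E}\mu_X = \mathbb{E}\sqrt{1-u_X^2}$, etc.

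The main obstacle is therefore the joint convexity claim used in part (2): I need that $(\alpha,\mu)\mapsto \alpha^2/\mu$ is jointly convex for $\mu>0$ (standard: this is the perspective of $\alpha\mapsto\alpha^2$, hence convex), and then that summing four such terms with the four independent ``slots'' $(\alpha,\mu)$, $(\alpha,\nu)$, $(\beta,\mu)$, $(\beta,\nu)$ preserves convexity in the joint variable $(\alpha,\beta,\mu,\nu)$ — which it does, since each summand is a convex function of a linear image of $(\alpha,\beta,\mu,\nu)$. One subtlety to check: the domain must be such that $\mu_X, \nu_X \in (0,1]$ stay in the region where everything is defined, and that $\mathbb{E}\mu_X > 0$, which holds since $u_X \in (-1,1)$ forces $\mu_X > 0$. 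With joint convexity in hand, part (2) is immediate from Jensen, and the lemma follows.
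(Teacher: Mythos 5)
Your proposal is correct and follows essentially the same route as the paper: part (1) is the same direct algebraic verification (the paper just organizes it more cleanly via the substitution $x=\sqrt{1-u^2}$, $y=\sqrt{1-w^2}$, reducing the identity to $(u-w)(uy-wx)=(1-uw-xy)(x+y)$), and part (2) is the same expansion into the four terms $\alpha^2/\mu,\ \alpha^2/\nu,\ \beta^2/\mu,\ \beta^2/\nu$ followed by the perspective-function/Cauchy--Schwarz bound $\mathbb{E}[X^2/Y]\geq (\mathbb{E}[X])^2/\mathbb{E}[Y]$ applied to each. No gaps.
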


\begin{proof}
    The proof of this theorem can be found in Section \ref{sec:proofs}.
\end{proof}

\begin{definition}Let $\hat{\Psi}_H$ be the class of all non-negative functions $\psi(a,b):(-1,1)\times [0,1)\mapsto \mathbb{R}$ that satisfy the following two conditions:
\begin{itemize}
    \item 
    $\psi(a,b)=0$ when $\sqrt{1-a^2}\leq b$. 
    \item $\psi(a,b)=\psi(-a,b)$
    \item For $(a_1,a_2) \in (-1,1)^2$ and $(b_1,b_2)\in [0,1)^2$, we have
    { \begin{align*}
       & \left(\left(\frac{a_1 - a_2}{2}\right)^2 + \left(\frac{b_1-b_2}{2}\right)^2\right)\left(\frac{1}{b_1} + \frac{1}{b_2} \right)  \geq \psi\left( \frac{a_1+a_2}{2},\frac{b_1+b_2}{2}\right) -\frac 12 \psi(a_1,b_1) - \frac 12 \psi(a_2,b_2).
    \end{align*}}
\end{itemize}
\end{definition}

\begin{rem}
    \label{rem:hel2}
    The following points are worth noting:
    \begin{enumerate}
        \item From \Cref{lem:twopoint}, it is immediate that $\hat{\Psi}_H \subseteq \Psi_H$.
        \item We do not have a conjecture for the maximal element of $\Psi_H$, but we have verified that the natural choice
$$\psi(a,b)=\frac{2(1-a^2-b^2)}{b}$$
has counterexamples. Let $X\in\{1,2,3\}$ be a ternary random variable and consider the following setting of parameters:
\begin{align*}P_X &= (0.9118,0.0760,0.0122),\\
(u_1,u_2,u_3) &= (0.9996,0.7316,0.2996),\\
(w_1, w_2, w_3) &= (0.9992,0.1416,0.5866).
\end{align*} 
One can show that this choice violates the constraint in \eqref{psi:constraint2}.
    \end{enumerate}
\end{rem}

\section{Proofs}
\label{sec:proofs}

\subsection{Proof of Remark \ref{rem:card}}
This remark follows from  
Caratheodery's theorem because, given $P_X$ one can find a distribution $Q_X$ with support of size at most
five such that
{\begin{align*}
    \mathbb{E}_{P_X}[(u_X-w_X)(J(w_X) - J(u_X))]&\geq \mathbb{E}_{Q_X}[(u_X-w_X)(J(w_X) - J(u_X))],\\
    \mathbb{E}_{P_X}(w_X)&=\mathbb{E}_{Q_X}(w_X),\\
    \mathbb{E}_{P_X}(u_X)&=\mathbb{E}_{Q_X}(u_X),\\
    \mathbb{E}_{P_X}(H_2(w_X))&=\mathbb{E}_{Q_X}(H_2(w_X)),\\
    \mathbb{E}_{P_X}(H_2(u_X))&=\mathbb{E}_{Q_X}(H_2(u_X)).
\end{align*}}

\subsection{Proof of Lemma \ref{lemma2}}
Clearly $\psi(a,b)=0$ for all $a,b$ belongs to $\Psi$. Therefore, $\Psi$ is non-empty. If $\psi_1,\psi_2 \in \Psi$, it is immediate at $\alpha \psi_1 + (1-\alpha)\psi_2 \in \Psi$ for $\alpha \in [0,1]$, and hence $\Psi$ is convex.

 Let $\psi_n \in \Psi,n\in N$ be a sequence of functions that converge (pointwise) to $\psi_\infty$ on $(0,1)^2$. We have the following pointwise inequalities
    {\begin{align*}
        &\frac{1}{2}\EX{(u_X-w_X)(J(w_X) - J(u_X))}\\
        &\quad\geq \psi_n\left(\frac{\EX{u_X+w_X}}{2},\frac{\EX{H_2(u_X) + H_2(w_X)}}{2}\right) - \frac{\psi_n(\EX{u_X},\EX{H_2(u_X)}) + \psi_n(\EX{w_X},\EX{H_2(w_X)})}{2},
    \end{align*}}
    by taking limits on both side with respect to $n$, we have that
    {\begin{align*}
        &\frac{1}{2}\EX{(u_X-w_X)(J(w_X) - J(u_X))}\\
        &\quad\geq \psi_\infty\left(\frac{\EX{u_X+w_X}}{2},\frac{\EX{H_2(u_X) + H_2(w_X)}}{2}\right) - \frac{\psi_\infty(\EX{u_X},\EX{H_2(u_X)}) + \psi_\infty(\EX{w_X},\EX{H_2(w_X)})}{2},
    \end{align*}}
    Therefore, $\Psi$ is closed.

    On the other hand, $\Psi$ is closed under pointwise maximum. Let $\psi_1,\psi_2\in \Psi$, and set
$\psi(a,b)=\max(\psi_1(a,b),\psi_2(a,b))$. We have that for any $i\in\{1,2\}$:
    {\begin{align*}
        &\frac{1}{2}\EX{(u_X-w_X)(J(w_X) - J(u_X))} \\
        &\quad\geq \psi_i\left(\frac{\EX{u_X+w_X}}{2},\frac{\EX{H_2(u_X) + H_2(w_X)}}{2}\right) - \frac{\psi_i(\EX{u_X},\EX{H_2(u_X)}) + \psi_i(\EX{w_X},\EX{H_2(w_X)})}{2}\\
        &\quad\geq \psi_i\left(\frac{\EX{u_X+w_X}}{2},\frac{\EX{H_2(u_X) + H_2(w_X)}}{2}\right) - \frac{\psi (\EX{u_X},\EX{H_2(u_X)}) + \psi (\EX{w_X},\EX{H_2(w_X)})}{2}.
    \end{align*}}
    By taking the maximum of the right-hand side over $i\in\{1,2\}$, we get that $\psi\in\Psi$. %Finally since $\Psi(a,b) \in \Psi \implies \psi(1-a,b) \in \Psi$ (see \Cref{rem:flipsym}), the pointwise maximum is symmetric about $a=\frac 12$.

\subsection{Proof of Theorem \ref{thm24}}
\subsubsection{Step 1: Existence of minimizer in the expanded space}

We need the following lemma:
\begin{lemma}\label{lemmaMinm}
    The infimum
{\begin{align*}
    \zeta(m_u,m_w,e_u,e_w) = \inf_{p(x),\{u_x,w_x\}}&\frac{1}{2}\EX{(u_X-w_X)(J(w_X) - J(u_X))}
\end{align*}}
subject to $(u_x,w_x)\in(0,1)^2$
\begin{align*}
\EX{u_X} &= m_u,\\
    \EX{w_X} &= m_w,\\
    \EX{H_2(u_X)} &= e_u,\\
    \EX{H_2(w_X)} &= e_w,
\end{align*}
can be written as a minimum 
if we expand the domain of $(u_x,w_x)$ to 
$(u_x, w_x)\in(0,1)^2\cup \{(0,0),(1,1)\}$ and define $(J(w) - J(u))(u - w)=0$ when $(u,w)\in\{(0,0),(1,1)\}$. Moreover, any minimizer must satisfy the following property: if $p(x_1),p(x_2)>0$ then one cannot simultaneously have $u_{x_1}>u_{x_2}$ and $w_{x_1}<w_{x_2}$.
\end{lemma}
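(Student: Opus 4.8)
\textbf{Proof proposal for Lemma \ref{lemmaMinm}.}

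The plan is to split the claim into two parts: (i) the infimum defining $\zeta$ is attained on the slightly enlarged domain $(0,1)^2\cup\{(0,0),(1,1)\}$, and (ii) any minimizer has the stated monotone-coupling property. For part (i), I would first invoke Remark \ref{rem:card}: by Caratheodory's theorem, it suffices to restrict to distributions $P_X$ supported on at most five atoms, so the optimization is over a fixed finite-dimensional set --- a vector $(p_1,\dots,p_5)$ in the simplex together with five pairs $(u_x,w_x)\in(0,1)^2$ --- subject to the four linear-in-$p$ moment constraints. The objective $\tfrac12\sum_x p_x (u_x-w_x)(J(w_x)-J(u_x)) = \tfrac12\sum_x p_x\big(D_2(u_x\|w_x)+D_2(w_x\|u_x)\big)$ is continuous and nonnegative on $(0,1)^2$ for each $x$. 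The only obstruction to compactness is that a minimizing sequence may push some $(u_x,w_x)$ toward the boundary of the unit square. I would argue that along a minimizing sequence, for each $x$ with $p_x$ bounded away from $0$, the pair $(u_x,w_x)$ cannot approach a boundary point other than $(0,0)$ or $(1,1)$: if $u_x\to 0$ while $w_x$ stays bounded away from $0$ (or vice versa), then $D_2(w_x\|u_x)\to\infty$, forcing the objective to blow up, contradicting minimality; the constraints $\EX{H_2(u_X)}=e_u$, $\EX{H_2(w_X)}=e_w$ with $e_u,e_w$ finite also prevent mass escaping to the boundary in an uncontrolled way. Hence the only admissible limit points of coordinates are interior points or the two corners $(0,0),(1,1)$, where we have extended the objective continuously by $0$ (consistent with $D_2(0\|0)=D_2(1\|1)=0$ and $H_2(0)=H_2(1)=0$). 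Since $H_2$ extends continuously to $\{0,1\}$ and the enlarged coordinate domain $(0,1)^2\cup\{(0,0),(1,1)\}$ together with the simplex is now compact, and the extended objective and all constraint functions are continuous on it, a minimizer exists by Weierstrass; the constraint set is nonempty and closed, so the minimum is finite and attained.

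For part (ii), the monotone-coupling property, I would argue by an exchange/swap argument. Suppose a minimizer has two atoms $x_1,x_2$ with $p_{x_1},p_{x_2}>0$ and $u_{x_1}>u_{x_2}$ but $w_{x_1}<w_{x_2}$. Consider modifying only these two atoms: replace the pairs $(u_{x_1},w_{x_1})$ and $(u_{x_2},w_{x_2})$ by re-pairing so that the larger $u$ goes with the larger $w$, e.g. form $(u_{x_1},w_{x_2})$ and $(u_{x_2},w_{x_1})$, keeping the same probabilities $p_{x_1},p_{x_2}$ attached in the obvious way (or, if $p_{x_1}\ne p_{x_2}$, split the atom with larger mass so the masses match and then re-pair). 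This operation preserves all four moment constraints, because the multiset of $u$-values and the multiset of $w$-values (weighted by their masses) are unchanged, hence $\EX{u_X},\EX{w_X},\EX{H_2(u_X)},\EX{H_2(w_X)}$ are all unchanged. It remains to check that the objective strictly decreases. Writing $\Phi(u,w)=(u-w)(J(w)-J(u))=D_2(u\|w)+D_2(w\|u)$, I would show the rearrangement inequality
\begin{align*}
\Phi(u_{x_1},w_{x_1})+\Phi(u_{x_2},w_{x_2}) > \Phi(u_{x_1},w_{x_2})+\Phi(u_{x_2},w_{x_1})
\end{align*}
whenever $u_{x_1}>u_{x_2}$ and $w_{x_1}<w_{x_2}$. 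This follows from a supermodularity/monotonicity property of $\Phi$: the mixed partial derivative $\partial^2 \Phi/\partial u\,\partial w = -J'(u)-J'(w) < 0$ since $J$ is strictly decreasing on $(0,1)$, so $\Phi$ is strictly submodular, which gives exactly the claimed strict inequality (the ``anti-rearrangement'' direction --- pairing high with low is worse than pairing high with high for a submodular function, matching the sign). This contradicts minimality, so no such pair of atoms can exist.

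The main obstacle I anticipate is the boundary analysis in part (i): one must carefully rule out that mass escapes to boundary points of the square other than the two diagonal corners, and confirm that the extension of $\Phi$ by $0$ at $(0,0)$ and $(1,1)$ is the \emph{correct} (lower semicontinuous) extension, i.e. that $\liminf \Phi(u_k,w_k) \geq 0 = \Phi(0,0)$ along any sequence $(u_k,w_k)\to(0,0)$ within $(0,1)^2$ --- which holds since $\Phi\geq 0$ everywhere --- while also verifying the constraint functions $H_2(u),H_2(w)$ extend continuously. A secondary subtlety is handling the case $p_{x_1}\ne p_{x_2}$ in the swap argument cleanly; splitting atoms to equalize masses resolves it but should be stated carefully so that the support size does not matter (which is fine, since part (ii) is a structural statement about any minimizer, not tied to the cardinality-$5$ reduction).
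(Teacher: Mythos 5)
Your proposal follows essentially the same route as the paper: a compactness argument ruling out every boundary limit point except $(0,0)$ and $(1,1)$, where the objective is extended by $0$ (only a lower-semicontinuous extension, as you note in your obstacles paragraph — the paper relies on the same fact by zeroing out the corner terms along the minimizing sequence), followed by the identical mass-splitting swap argument for the monotone-coupling property, which the paper justifies via the explicit identity $\Phi(u_1,w_1)+\Phi(u_2,w_2)-\Phi(u_1,w_2)-\Phi(u_2,w_1)=(J(w_1)-J(w_2))(u_1-u_2)+(J(u_1)-J(u_2))(w_1-w_2)>0$ rather than via supermodularity. One small slip: the mixed partial is $\partial^2\Phi/\partial u\,\partial w = J'(u)+J'(w)$, not $-J'(u)-J'(w)$; since $J'<0$ the correct expression is indeed negative, so your submodularity conclusion and the direction of the rearrangement stand.
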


\begin{proof}

Fix some alphabet $\mathcal{X}$ and consider  a sequence 
$\{(p_i(x), u_{ix}, w_{ix})\}$ for $i=1,2,\cdots$ converging to the infimum defining $\zeta(m_u,m_w,e_u,e_w)$. Since $\{(p_i(x), u_{ix}, w_{ix})\}\in[0,1]^{3|\mathcal{X}|}$ lies in a compact source, without loss of generality, we can assume that 
$\{(p_i(x), u_{ix}, w_{ix})\}$ converges to some sequence $\{(p(x), u_x, w_x)\}\in[0,1]^{3|\mathcal{X}|}$, where $p(x)$ is a probability distribution on $\mathcal{X}$. Let $\mathcal{S}=\{x\in\mathcal{X}: p(x)>0\}$ be the support set of $X$.

Observe that the limiting $u_x$ and $w_x$ might be equal to $0$ or $1$. Note that $(u-w)(J(w) - J(u))$ converges to infinity if one of $u$ and $v$ converges to $0$ and the other one converges to $1$. Therefore, the only possible case is that the limiting $(u_x,w_x)$ might be equal to $(0,0)$ or $(1,1)$ for some $x$. Note that 
$(J(w) - J(u))(u - w)\geq 0$ for every $u,w\in(0,1)$. Consequently, if $(u(x),w(x))\in\{(0,0),(1,1)\}$ for some $x\in\mathcal{S}$, changing the value of $w_{ix}$ to be equal to $u_{ix}$ for such $x$ would decrease $(J(w_{xi}) - J(u_{xi}))(u_{xi} - w_{xi})$ to $0$, and cannot affect the limit of the other terms. Therefore, if $(u(x),w(x))\in\{(0,0),(1,1)\}$ for some $x\in\mathcal{S}$, without loss of generality we can assume that $u_{ix}=w_{ix}$ for every $i$. Let 
\begin{align*}
    \mathcal{S}_0&=\{x\in\mathcal{S}: u_x=w_x=0\},\\
     \mathcal{S}_1&=\{x\in\mathcal{S}: u_x=w_x=1\},\\
      \mathcal{S}_2&=\{x\in\mathcal{S}: u_x,w_x\in(0,1)\}.
\end{align*}
In this case, the infimum of the expression will be equal to
\begin{align*}
    &\frac{1}{2}\sum_{x\in\mathcal{S}_2} p(x)(J(w_x) - J(u_x))(u_x - w_x).
\end{align*}
This shows that if we extend the domain of $(u,w)$ from $(0,1)^2$ to $(0,1)^2\cup \{(0,0),(1,1)\}$ and define
$(J(w) - J(u))(u - w)= 0$ when $(u,w)\in\{(0,0),(1,1)\}$, the infimum of the expression will also be a minimum and will be attained at some $p(x), u_x, w_x$ where $(u_x, w_x)\in(0,1)^2\cup \{(0,0),(1,1)\}$ for every $x$.

Next, let us consider a minimizer $p(x),u_x,w_x$ for $x\in\mathcal{X}$. Assume that $u_{x_1}>u_{x_2}$ and $w_{x_1}<w_{x_2}$. Then, using the fact that $J(x)$ is a decreasing function, we have
{$$(J(w_{x_1})- J(w_{x_2}))(u_{x_1}-u_{x_2})+ (J(u_{x_1})- J(u_{x_2}))(w_{x_1} -w_{x_2} )>0.$$}

Take $x_1$ and $x_2$ where $p(x_1)\geq p(x_2)>0$. Take a symbol $x_3\notin \mathcal{X}$ and expand the alphabet of $X$ as  $\mathcal{X}'=\mathcal{X}\cup \{x_3\}$. Let $p(X'=x)=p(X=x)$ if $x\notin \{x_1,x_3\}$, $p(X'=x_1)=p(x_1)-p(x_2)$ and $p(X'=x_3)=p(x_2)$. Let $(\hat u_{x'},\hat w_{x'})=(u_{x'},w_{x'})$ if $x'\notin \{x_2,x_3\}$, and 
$(\hat u_{x_2},\hat w_{x_2})=(u_{x_2},w_{x_1})$
and
$(\hat u_{x_3},\hat w_{x_3})=(u_{x_1},w_{x_2})$. Observe that 
{\begin{align*}
    &(u_{x_1}-w_{x_1})(J(w_{x_1})-J(u_{x_1}))+(u_{x_2}-w_{x_2})(J(w_{x_2})-J(u_{x_2}))\\
    &\quad>(u_{x_1}-w_{x_2})(J(w_{x_2})-J(u_{x_1}))+(u_{x_2}-w_{x_1})(J(w_{x_1})-J(u_{x_2}))
\end{align*}}
because the difference between the left-hand side and the right-hand side equals
$$(J(w_{x_1})- J(w_{x_2}))(u_{x_1}-u_{x_2})+ (J(u_{x_1})- J(u_{x_2}))(w_{x_1} -w_{x_2} ).$$
One can then deduce that $X'$ along with $u_{X'}$ and $w_{X'}$ will yield a smaller value of
$\frac{1}{2}\EX{(u_{X'}-w_{X'})(J(w_{X'}) - J(u_{X'}))}$ than $X$, $u_X$ and $w_X$. Moreover, $\mathbb{E}\hat u_{X'}=\mathbb{E}u_{X}$, $\mathbb{E}\hat w_{X'}=\mathbb{E}w_{X}$, $\mathbb{E}H(\hat u_{X'})=\mathbb{E}H(u_{X})$, $\mathbb{E}H(\hat w_{X'})=\mathbb{E}H(w_{X})$. This contradicts the assumption of minimality of $X$, $u_X$ and $w_X$.
\end{proof}

Let us return to the proof of Theorem \ref{thm24}. 
Suppose $p(x), u_x, w_x$ is a minimizer (which exists due to the above lemma when we allow $(0,0)$ and $(1,1)$ in the domain). 

\subsubsection{Step 2: finding a minimizer satisfying $u_x+w_x=1$ for all $x$ where $(u_x,w_x)\in(0,1)^2$ }
We first show that one can find another minimizer satisfying $u_x+w_x=1$ for all $x$ where $(u_x,w_x)\in(0,1)^2$. 
To construct the new minimizer, we first apply a symmetrization argument. Let $\mathcal{X}'=\mathcal{X}\times\{1,2\}$. For any $x'=(x,j)$ where $x\in\mathcal{X}$ and $j\in\{1,2\}$ define
$$p(x')=\frac12 p(x)$$
and define $u_{x'}$ and $v_{x'}$ as follows: 
$$u_{(x,1)}=u_x, \qquad w_{(x,1)}=w_x$$
$$u_{(x,2)}=1-w_x, \qquad w_{(x,2)}=1-u_x$$
One can verify that
\begin{align*}
\EX{u_{X'}} &= m,\\
    \EX{w_{X'}} &= 1-m,\\
    \EX{H_2(u_{X'})} &= e,\\
    \EX{H_2(w_{X'})} &= e.
\end{align*}
Moreover, 
{\begin{align*}
    \EX{(u_X-w_X)(J(w_X) - J(u_X))}=
    \EX{(u_{X'}-w_{X'})(J(w_{X'}) - J(u_{X'}))}
\end{align*}}
Thus, $p(x'), u_{x'}, w_{x'}$ is also a minimizer.

Next, for every $x$ where $(u_x,w_x)\in(0,1)^2$, let $v_x$ be the unique solution of
\begin{align*}
    \frac{H_2(v_x)}{2v_x - 1} = \frac{H_2(u_x) + H_2(w_x)}{2(u_x-w_x)}.
\end{align*}
Here, we set $v_x=1/2$ if $u_x=w_x$. Next, let
$$r_x =\frac{H_2(u_x) + H_2(w_x)}{2H_2(v_x)}.$$
\begin{lemma}\label{lmmnr}
    We have that 
$r_x\in[0,1]$. Moreover, 
\begin{align}
    r_x(1 - 2v_x)J(v_x) \leq \frac{1}{2}(u_x - w_x)(J(w_x) - J(u_x)).\label{eqnRH}
\end{align}
\end{lemma}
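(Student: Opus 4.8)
The plan is to verify the two claims in sequence, relying on elementary properties of the function $H_2$ and its derivative $J$. For the first claim, $r_x \in [0,1]$: by construction $r_x = \frac{H_2(u_x)+H_2(w_x)}{2H_2(v_x)}$, so nonnegativity is clear since $H_2 \geq 0$ on $[0,1]$. For the upper bound $r_x \leq 1$, I would argue that the defining equation for $v_x$, namely $\frac{H_2(v_x)}{2v_x-1} = \frac{H_2(u_x)+H_2(w_x)}{2(u_x-w_x)}$, together with the concavity of $H_2$, forces $H_2(v_x) \geq \frac{H_2(u_x)+H_2(w_x)}{2}$. The cleanest route is to recognize that $v_x$ is chosen so that the chord of the (concave) function $x \mapsto H_2(x)$ — or rather an appropriately reparametrized version — has matching slope; concretely, since $\frac{H_2(v_x)}{2v_x - 1}$ is the slope of the line through $(v_x, H_2(v_x))$ and the symmetry-point $(\tfrac12, 0)$ shifted, one shows $H_2(v_x)$ dominates the average $\tfrac12(H_2(u_x)+H_2(w_x))$. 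I should handle the degenerate case $u_x = w_x$ (where $v_x = \tfrac12$) separately, noting both sides of \eqref{eqnRH} vanish there.

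For the second claim, inequality \eqref{eqnRH}, I would substitute the definition of $r_x$ to rewrite the left-hand side as
\[
r_x (1-2v_x) J(v_x) = \frac{H_2(u_x)+H_2(w_x)}{2H_2(v_x)} \cdot (1-2v_x) J(v_x).
\]
Using the defining relation for $v_x$, the factor $(1-2v_x)$ can be traded for $(u_x - w_x)$ up to the ratio $\frac{H_2(u_x)+H_2(w_x)}{2 H_2(v_x)}$ appearing again — so after simplification the target inequality reduces to showing something of the form $(1-2v_x)J(v_x) \cdot \frac{H_2(u_x)+H_2(w_x)}{2H_2(v_x)} \leq \tfrac12 (u_x-w_x)(J(w_x)-J(u_x))$. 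Recalling the identity $\eta(y) = (1-2H_2^{-1}(y))J(H_2^{-1}(y))$ from the statement of Theorem \ref{thm24}, the left side is essentially $r_x \, \eta(H_2(v_x))$, and the natural strategy is to bound $\eta(H_2(v_x))$ against the two-point quantity on the right. I would prove this via the convexity/concavity behavior of the map $t \mapsto \eta(H_2(t))$ (equivalently $z \mapsto \eta(z)$), showing it lies below the secant-type expression built from $u_x, w_x$; this is where the Remark after Lemma \ref{lmm1} is useful, since $(u-w)(J(w)-J(u)) = D_2(u\|w) + D_2(w\|u)$ gives a cleaner handle on the right-hand side.

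The main obstacle I expect is the second inequality: matching the precisely-tuned quantity $r_x(1-2v_x)J(v_x)$ — which is engineered so that $v_x$ and $r_x$ solve the Euler–Lagrange-type conditions of the variational problem defining $\zeta$ — against the two-point expression requires exploiting the exact choice of $v_x$ rather than a crude bound. I would look for a one-variable reduction: fix the "center of mass" data $\frac{H_2(u_x)+H_2(w_x)}{2(u_x-w_x)}$, and show that among all $(u,w)$ with that ratio, the quantity $\frac{(u-w)(J(w)-J(u))}{H_2(u)+H_2(w)}$ is minimized (or appropriately bounded) precisely when $u,w$ collapse to the single value $v_x$, i.e. the left-hand side is the "worst case." If a direct proof is unwieldy, an alternative is to differentiate both sides with respect to a parameter sweeping $(u_x,w_x)$ along the constraint curve and check monotonicity, using that $H_2$ is strictly concave and $J$ strictly decreasing, with equality exactly at $u_x = w_x = v_x$.
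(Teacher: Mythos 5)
Your proposal is a plan rather than a proof, and the hard steps are precisely the ones left unexecuted. For the first claim, your instinct (concavity of $H_2$) is in the right direction, but the statement ``the defining equation together with concavity forces $H_2(v_x)\geq \frac{H_2(u_x)+H_2(w_x)}{2}$'' is not an argument. The paper's route is: use $H_2(u)=H_2(1-u)$ and concavity to get $\frac{H_2(u_x)+H_2(w_x)}{2}\leq H_2\bigl(\frac{1-u_x+w_x}{2}\bigr)$, hence $u_x-w_x\leq 1-2H_2^{-1}\bigl(\frac{H_2(u_x)+H_2(w_x)}{2}\bigr)$; feed this into the defining relation for $v_x$ and invoke the monotonicity of $x\mapsto H_2(x)/(1-2x)$ on $[0,\tfrac12]$ to conclude $1-v_x\geq H_2^{-1}\bigl(\frac{H_2(u_x)+H_2(w_x)}{2}\bigr)$, i.e.\ $r_x\leq 1$. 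That monotonicity step is essential and absent from your sketch.

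For the second claim you miss the key algebraic identity: the definitions of $v_x$ and $r_x$ give exactly $r_x(2v_x-1)=u_x-w_x$, so \eqref{eqnRH} collapses to $-J(v_x)\leq \frac{J(w_x)-J(u_x)}{2}$ (for $u_x\geq w_x$). Your ``after simplification the target reduces to\ldots'' merely restates the original inequality with $r_x$ substituted; you never perform the cancellation. From the clean reduced form, the paper introduces $z=\frac{\sqrt{w_x(1-u_x)}}{\sqrt{w_x(1-u_x)}+\sqrt{u_x(1-w_x)}}$ satisfying $J(z)=\frac{J(w_x)-J(u_x)}{2}$, converts the claim to $1-v_x\geq z$ via monotonicity of $J$ and again of $H_2(x)/(1-2x)$, and finally proves the resulting explicit two-variable inequality by the substitution $a=\frac12\bigl(\frac{u_x}{1-u_x}+\frac{w_x}{1-w_x}\bigr)$, $b=\sqrt{\frac{u_x}{1-u_x}\cdot\frac{w_x}{1-w_x}}$, a monotonicity-in-$a$ argument, and a one-variable calculus analysis of $f(b)\geq 0$. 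Your proposed alternatives (convexity of $\eta\circ H_2$, or a ``one-variable reduction along the constraint curve'') are unverified strategies that would still require all of this work; as written, neither part of the lemma is actually proved.
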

The proof of the lemma is given below. We use this lemma as follows: we are choosing the pair
$$u_{(x,1)}=u_x, \qquad w_{(x,1)}=w_x$$
with probability $p(x)/2$ and the pair
$$u_{(x,2)}=1-w_x, \qquad w_{(x,2)}=1-u_x$$
with probability $p(x)/2$. Instead, let us choose the pair
$(0,0)$ with probability $(1-r_x)p(x)/2$, the pair
$(1,1)$ with probability $(1-r_x)p(x)/2$ and finally the pair $(v_x,1-v_x)$ with probability $r_xp(x)/2$. We apply this replacement for every $x$ where $(u_x,w_x)\in(0,1)^2$. One can verify that this transformation preserves $\mathbb{E}[u_{X'}]$, $\mathbb{E}[w_{X'}]$ as well as $\mathbb{E}[H_2(u_{X'})]$, $\mathbb{E}[H_2(w_{X'})]$. Moreover, 
\eqref{eqnRH} implies that the value of the objective function after the transformation is less than or equal to the original value. Observe that after the transformation, we are using either the pairs $(0,0)$, $(1,1)$ or $(v_x,1-v_x)$. This establishes the statement desired in Step 2. 

It remains to prove Proof of Lemma \ref{lmmnr}. 

\begin{proof}[Proof of Lemma \ref{lmmnr}]

 Assume that $u_x\geq w_x$ (the proof for the other case is similar). In this case, $v_x\geq 1/2$. We have
{\begin{align*}
    \frac{H_2(u_x) + H_2(w_x)}{2}&=\frac{H_2(1-u_x) + H_2(w_x)}{2}\leq H_2\left(\frac{1-u_x+w_x}{2}\right).
\end{align*}}
Thus,
\begin{align*}
H_2^{-1}\left(\frac{H_2(u_x) + H_2(w_x)}{2}\right)&\leq \frac{1-u_x+w_x}{2}
\end{align*}
and we obtain
\begin{align*}
    0\leq u_x-w_x&\leq 1-2H_2^{-1}\left(\frac{H_2(u_x) + H_2(w_x)}{2}\right)
\end{align*}
and we can write
{\begin{align*}
    &\frac{H_2(1-v_x)}{1-2(1-v_x)} = \frac{H_2(u_x) + H_2(w_x)}{2(u_x-w_x)}\geq \frac{\frac{H_2(u_x) + H_2(w_x)}{2}}{1-2H_2^{-1}(\frac{H_2(u_x) + H_2(w_x)}{2})}=\frac{H_2(H_2^{-1}(\frac{H_2(u_x) + H_2(w_x)}{2}))}{1-2H_2^{-1}(\frac{H_2(u_x) + H_2(w_x)}{2})}.
\end{align*}}
Since $H_2(x)/(1-2x)$ is increasing in $x\in[0,0.5]$, we get 
$$1-v_x\geq H_2^{-1}(\frac{H_2(u_x) + H_2(w_x)}{2})$$
Thus,
$$H_2(v_x)\geq \frac{H_2(u_x) + H_2(w_x)}{2}$$
This completes the proof for $r_x\in[0,1]$.

We now prove that
\begin{align}
    r_x(1 - 2v_x)J(v_x) \leq \frac{1}{2}(u_x - w_x)(J(w_x) - J(u_x)).\label{ineqalitytoshow}
\end{align}
From the definition of $r_x$ and $v_x$ we have
$$r_x(2v_x-1)=(u_x-w_x).$$
We show the inequality when $u_x\geq w_x$ (which implies $v_x\geq 1/2$). The proof for the other case is similar. Observe that \eqref{ineqalitytoshow} is equivalent to
\begin{align*}
    -J(v_x) \leq \frac{J(w_x) - J(u_x)}{2}.
\end{align*}
Let
\begin{align}
    z &= \frac{\sqrt{w_x(1 - u_x)}}{\sqrt{w_x(1 - u_x)} + \sqrt{u_x(1 - w_x)}}\in\left(0,0.5\right].\label{defZ}
\end{align}
Then, one can verify that 
\begin{align}
    J(z) = \frac{J(w_x) - J(u_x)}{2}\geq 0.\label{defZ2}
\end{align}
We wish to show that $-J(v_x)=J(1-v_x)\leq J(z)$. Since $J$ is a decreasing function, this is equivalent to $1-v_x\geq z$. Since $H_2(x)/(1-2x)$ is increasing in $x\in[0,0.5]$, this inequality is equivalent to
\begin{align*}
    \frac{H_2(z)}{1-2z}\leq \frac{H_2(1-v_x)}{1-2(1-v_x)} = \frac{H_2(u_x) + H_2(w_x)}{2(u_x-w_x)}.
\end{align*}
To sum this up, for $z$ defined by \eqref{defZ}, we need to show the above inequality. Equivalently,
{\begin{align*}
    &H_2(u_x) + H_2(w_x)\\
    &\geq 2\left(\sqrt{w_x(1 - u_x)} + \sqrt{u_x(1 - w_x)}\right)^2H_2\left(\frac{\sqrt{w_x(1 - u_x)}}{\sqrt{w_x(1 - u_x)} + \sqrt{u_x(1 - w_x)}}\right)\\
    &= -w_x(1-u_x)\log_2(w_x(1 - u_x)) -u_x(1-w_x)\log_2(u_x(1 - w_x)) \\
    &\quad- 2\sqrt{w_x(1-u_x)u_x(1-w_x)}\log_2\sqrt{w_x(1-u_x)u_x(1-w_x)} \\
    &\quad+\left(\sqrt{u_x(1-w_x)} + \sqrt{w_x(1-u_x)}\right)^2\log_2\left(\sqrt{u_x(1-w_x)} + \sqrt{w_x(1-u_x)}\right)^2.
\end{align*}}
By expanding the left-hand side, the inequality is equivalent to proving
{\begin{align}
    &2\sqrt{w_x(1-u_x)u_x(1-w_x)}\log_2\sqrt{w_x(1-u_x)u_x(1-w_x)}\nonumber \\
    &\quad\geq w_xu_x\log_2(w_xu_x) + (1-w_x)(1 - u_x)\log_2((1 - w_x)(1-u_x))\nonumber\\
    &\qquad+ \left(\sqrt{u_x(1-w_x)} + \sqrt{w_x(1-u_x)}\right)^2\log_2\left(\sqrt{u_x(1-w_x)} + \sqrt{w_x(1-u_x)}\right)^2.
\label{eqnDf}
\end{align}}
Let 
$$a=\frac{\frac{u_x}{1-u_x}+\frac{w_x}{1-w_x}}{2},$$ and $$b=\sqrt{\frac{u_x}{1-u_x}\frac{w_x}{1-w_x}}.$$
Observe that we have $a \geq b\geq 0$. If we divide both sides of \eqref{eqnDf} by $(1-u_x)(1-w_x)$ and apply a change of variable, the inequality will be equivalent to 
\begin{align*}
   & (1+2a+b^2) \log_2 (1+2a+b^2)  - 2(a+b) \log_2(2a+2b)\geq 2b^2 \log_2b - 2b \log_2(b).
\end{align*}
Taking the derivative in $a$, we obtain
$2\log(1+2a+b^2) - 2\log(2a+2b) \geq 0. $ Therefore, it suffices to prove the above inequality for $a=b$. In other words, we need to show that
{\begin{align*}
   &  2b \log(b) + (1+2b+b^2) \log (1+2b+b^2) - 2b^2 \log b - 4b \log4b \geq 0.
\end{align*}}
Equivalently, we need to show that
{\begin{align*}
   &  f(b):=(1+b)^2 \log (1+b) - b^2 \log b -  b \log (b) - 2b \log4 \geq 0.
\end{align*}}
Observe that
$b^2 f\left(\frac{1}{b}\right) = f(b)$. Therefore, it suffices to consider $b \in (0,1]$.
\end{proof}

Note that
\begin{align*}
    f'(b) &= 2(1+b)\log(1+b) - 2b\log b - \log b - 2 \log 4.
\end{align*}
One can verify that there is exactly one solution for $f''(b)=0$ in $(0,1)$.
Further, one can observe that $f(b)$ initially increases and is concave, attains a maximum, then decreases, and then turns convex in the interval $(0,1)$. Note that $f'(1)=0$ and hence decreasing as $b \uparrow 1$. As $f(0)=0$ and $f(1)=0$, and $f'(0)=\infty$, if $f(b)$ had a local minimum with a negative value, it must have had two local maximums in $(0,1)$. This contradicts that there is exactly one solution for $f''(b)=0$ in $(0,1)$. Therefore, $f(b) \geq 0$ for $b \in (0,1)$, and equality only holds when $b \in \{0,1\}.$

\subsubsection{Step 3: Simplifying the minimizer}
Assume that we have a minimizer satisfying $u_x+w_x=1$ for all $x$ where $(u_x,w_x)\in(0,1)^2$. Take $x_1$ and $x_2$ such that $p(x_1), p(x_2)>0$ and  $u_{x_1}+w_{x_1}=1$ and $u_{x_2}+w_{x_2}=1$. We claim that $u_{x_1}=u_{x_2}$ and $w_{x_1}=w_{x_2}$. Assume that $u_{x_1}> u_{x_2}$. Then, by Lemma \ref{lemmaMinm}, we get $w_{x_1}\geq w_{x_2}$. This would imply $1-u_{x_1}\geq 1-u_{x_2}$ or $u_{x_1}\leq u_{x_2}$ which is a contradiction. Thus, $u_{x_1}=u_{x_2}$ and $w_{x_1}=w_{x_2}$. This shows that we may put weights on at most a single term $(v,1-v)$. Thus, we can take the pairs $(0,0)$ with some probability $p_0$, $(1,1)$ with some probability $p_1$ and $(v,1-v)$ for some $v$ with some probability $p_2$. Since $\mathbb{E}[u_X]=m$ and $\mathbb{E}[w_X]=1-m$, we get
\begin{align}
    vp_2+p_1&=m\\
(1-v)p_2+p_1&=1-m
\end{align}
Summing up, we get, $p_2+2p_1=1$ which implies $p_0=p_1=(1-p_2)/2$ and we get
\begin{align}
    vp_2+(1-p_2)/2&=m.\end{align}
Using 
$$e=p_2H_2(v)$$
the objective function reduces to the expression given in the statement of the theorem.

\subsection{Proof of \Cref{lem:twopoint}}
We wish to show that
{\begin{align*}
        &\frac{1}{2}  (u - w)\left(\frac{u}{\sqrt{1-u^2}} - \frac{w}{\sqrt{1-w^2}} \right)= \left(\left(\frac{u-w}{2}\right)^2 + \left(\frac{\sqrt{1-u^2}-\sqrt{1-w^2}}{2}\right)^2\right)\left(\frac{1}{\sqrt{1-u^2}} + \frac{1}{\sqrt{1-w^2}} \right).
    \end{align*}}
    Let $\sqrt{1-u^2}=x$ and $\sqrt{1-w^2}=y$. Then, the desired equality can be written as, requiring to show that,
    \begin{align*}
        \frac{1}{2}  (u - w)\frac{(uy-wx)}{xy} = \left(\left(\frac{u-w}{2}\right)^2 + \left(\frac{x-y}{2}\right)^2\right)\frac{(x+y)}{xy}.
    \end{align*}
    Using $u^2+x^2=1$ and $w^2+y^2=1$, the desired equality can be written as, requiring to show that,
    \begin{align*}
          (u - w)(uy-wx) = (1-uw-xy)(x+y).
    \end{align*}
    which is immediate by expansion and using $u^2+x^2=1$ and $w^2+y^2=1$.

    Cauchy-Schwarz implies that if $Y> 0$, then $\EX{\frac{X^2}{Y}} \geq \frac{\EX{X}^2}{\EX{Y}}. $ Now, the second part is immediate, by applying the earlier part and expanding the expression, naturally, into four terms. 

\section{Discussion and Conclusion}
The most-informative Boolean function conjecture and Talagrand's isoperimetric inequality are well-studied questions in information theory and combinatorics, respectively. The latter is a limiting case of the Hellinger conjecture. Despite a naturally degraded structure with respect to the noise operator, previous studies have not explicitly used this observation. In this work, we study both conjectures via the lens of the degraded channels and work on the derivative of the quantity of interest. This approach yields a finite-dimensional functional inequality whose solutions provide a dimension-independent lower bound to the original problem.

\section*{Acknowledgements}
The authors would like to thank Lau Chin Wa for discussions at the preliminary stages of the work, as well as Venkat Anantharam with whom they had shared some initial forms of the conjecture.

\thispagestyle{empty}
\bibliographystyle{IEEEtran}
\bibliography{mybiblio}

\thispagestyle{empty}

\appendices
\section{Evidence for Conjecture \ref{conj:gue}}
\label{appevidence}

Define the odd function $L: [0,\frac 12) \to \mathbb{R}_+$ according to $$L(u)=\frac{2H_2(u)}{1-2u},\qquad\qquad u \neq \frac 12$$ and $L(u)=\infty$ when $u=\frac 12$. Then, $r$ in \eqref{defreq} can be expressed as $r=H_2\left(L^{-1}\left(\frac{2y}{|1-2x|}\right)\right)$. Moreover, for every $x,y$ satisfying $H_2((1-x)/2)\geq y$, we have
$$xJ\left(L^{-1}\left(\frac{y}{x}\right)\right)=\phi(\frac12,\frac{y}{2})-\phi(\frac{1-x}{2},\frac{y}{2})=\zeta\left(\frac{1-x}{2},\frac{1+x}{2},\frac{y}{2},\frac{y}{2}\right).$$
Next, define
$$\kappa(u,w):=(u-w)\frac{J(w) - J(u)}{2}-|u-w|J\left(L^{-1}\left(\frac{H_2(u) + H_2(w)}{|u - w|}\right)\right),$$
for $(u,w)\in(0,1)^2$, $u\neq w$. When $u=w\in[0,1]$, we set $\kappa(u,w)=0$. 

Since $$\zeta(u,w,H_2(u),H_2(w))=\zeta(1-w,1-u,H_2(w),H_2(u))=(u-w)\frac{J(w) - J(u)}{2}$$
 by the joint convexity of $\zeta$, we obtain
\begin{align}
\kappa(u,w)&=\frac12\zeta(u,w,H_2(u),H_2(w))+\frac12\zeta(1-w,1-u,H_2(w),H_2(u))\nonumber\\&\qquad-\zeta\left(\frac{1+u-w}{2},\frac{1+w-u}{2},\frac{H_2(u)+H_2(w)}{2},\frac{H_2(u)+H_2(w)}{2}\right)\nonumber
\\&\geq 0. \label{eq:kapnonneg}
\end{align}

We propose the following inequalities (backed by numerical evidence).
\begin{conj}\label{conj2} The following holds:
    \begin{itemize}
        \item for $u,w\in(0,0.5]$ we have $\kappa(u,w)\leq \kappa(1-u,w)$,
        \item $\kappa(H_2^{-1}(u),H_2^{-1}(w))$ is jointly convex in $(u,w)$.
    \end{itemize}
\end{conj}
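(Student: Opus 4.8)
Both parts concern the explicit two--variable function $\kappa$, and in each case the plan is to reduce to a one--dimensional calculus problem; the common preliminary step is to trade the implicit map $L^{-1}$ for an auxiliary variable carrying its defining relation. Using the identity of Theorem~\ref{thm24} in the form $|u-w|\,J\!\bigl(L^{-1}(\tfrac{H_2(u)+H_2(w)}{|u-w|})\bigr)=\eta(\bar e)-\phi\bigl(\tfrac{1-|u-w|}{2},\bar e\bigr)$ with $\bar e:=\tfrac12(H_2(u)+H_2(w))$ (legitimate since $\bar e\le H_2(\tfrac{1-|u-w|}{2})$ by concavity of $H_2$), one obtains
\[\kappa(u,w)=\tfrac12(u-w)\bigl(J(w)-J(u)\bigr)-\eta(\bar e)+\phi\!\left(\tfrac{1-|u-w|}{2},\bar e\right),\]
in which the only non--explicit ingredient is the inner $\phi$; alternatively, retaining $v:=L^{-1}(\tfrac{H_2(u)+H_2(w)}{|u-w|})\in[0,\tfrac12)$, determined by $2|u-w|\,H_2(v)=(1-2v)(H_2(u)+H_2(w))$, gives $\kappa(u,w)=\tfrac12(u-w)(J(w)-J(u))-|u-w|\,J(v)$, which can be differentiated with $v$ handled implicitly.

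For the first bullet, fix $w\in(0,\tfrac12]$ and set $g_w(u):=\kappa(1-u,w)-\kappa(u,w)$. Since $H_2(1-u)=H_2(u)$, $J(1-u)=-J(u)$ and $1-u\ge\tfrac12\ge w$, the displayed formula yields
\[g_w(u)=\tfrac12\bigl[(1-2u)J(w)+(1-2w)J(u)\bigr]-\Bigl(\phi\!\left(\tfrac{1-|u-w|}{2},\bar e\right)-\phi\!\left(\tfrac{u+w}{2},\bar e\right)\Bigr).\]
Both displayed terms are nonnegative --- the bracket because $u,w\le\tfrac12$, and the parenthesised $\phi$--difference because $\tfrac{u+w}{2}\le\tfrac{1-|u-w|}{2}\le\tfrac12$ while $\phi(\cdot,y)$ is non--decreasing on $[0,\tfrac12]$ (readily checked from its formula) --- so the claim is a contest between two nonnegative quantities. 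One verifies $g_w(\tfrac12)=0$ and $g_w(w)=0$ (at $u=w$ both terms equal $(1-2w)J(w)=\eta(H_2(w))$), and $g_w(0^+)=+\infty$; hence it suffices to show $g_w$ has no interior local minimum with negative value. I would do this by differentiating in $u$ (handling the $L^{-1}$--term by implicit differentiation), clearing denominators, and showing the sign changes of $g_w'$ are no more numerous than the two known roots $u=w,\tfrac12$ force. As in the proof of Lemma~\ref{lmmnr}, the practical step is a logarithmic change of variables (e.g.\ $\alpha=\tfrac{u}{1-u},\beta=\tfrac{w}{1-w}$, or the variable $z$ with $J(z)=\tfrac12(J(w)-J(u))$) turning $g_w'$ into an algebraic expression, together with a self--reciprocality symmetry that halves the range to inspect.

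For the second bullet, write $\Phi(p,q):=\kappa\bigl(H_2^{-1}(p),H_2^{-1}(q)\bigr)$ on $(0,1]^2$. Substituting $u=H_2^{-1}(p),w=H_2^{-1}(q)$ into \eqref{eq:kapnonneg} exhibits $\Phi$ as the difference of a jointly convex expression (two $\zeta$--terms of one--atom configurations, whose ``entropy'' coordinates are linear in $(p,q)$) and the subtracted term $\eta(\tfrac{p+q}{2})-\phi\bigl(\tfrac{1-|H_2^{-1}(p)-H_2^{-1}(q)|}{2},\tfrac{p+q}{2}\bigr)$, which is the only obstruction to convexity. The plan is to verify that $\mathrm{Hess}\,\Phi$ is positive semidefinite, which for a $2\times2$ matrix amounts to the two scalar inequalities $\partial_{pp}\Phi\ge0$ and $\det\mathrm{Hess}\,\Phi\ge0$; each entry is produced by implicit differentiation of $v$ through $2|u-w|H_2(v)=(1-2v)(H_2(u)+H_2(w))$ with $u=H_2^{-1}(p),w=H_2^{-1}(q)$, giving algebraic inequalities in $(p,q,v)$ subject to this constraint. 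I would then attempt to reduce them to the diagonal $p=q$ --- where $v=H_2^{-1}(\tfrac{p+q}{2})$ and everything is explicit --- by the atom--splitting symmetrisation of Step~2 in the proof of Theorem~\ref{thm24}, leaving a single--variable inequality of the same flavour as $f(b)\ge0$.

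The main obstacle throughout is that $\kappa$ is only implicitly defined through $L^{-1}$ (equivalently, through the optimizer $v$ of the $\zeta$--program), so every derivative entangles $u,w$ and $v$ and must be disentangled via the defining relation; for the first bullet this is lengthy but mechanical, whereas for the second, showing that a $2\times2$ Hessian with entries built from $H_2^{-1}$, $J$ and $v(u,w)$ is positive semidefinite is the step I expect to be genuinely hard, and whether the reduction--to--the--diagonal strategy closes the argument cleanly --- as the analogous chain of reductions did in Section~\ref{sec:proofs} --- is the real uncertainty in this plan.
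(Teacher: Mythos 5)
There is a genuine gap: this statement is left as an open conjecture in the paper itself (Conjecture~\ref{conj2} is supported only by numerical plots of the surfaces $\kappa(u,w)-\kappa(1-u,w)$ and of the Hessian of $\kappa(H_2^{-1}(u),H_2^{-1}(w))$), and your proposal does not close it either. Your preliminary reductions are correct and match the paper's framework: the rewriting $\kappa(u,w)=\tfrac12(u-w)(J(w)-J(u))-\eta(\bar e)+\phi\bigl(\tfrac{1-|u-w|}{2},\bar e\bigr)$ follows from the identity $xJ(L^{-1}(y/x))=\phi(\tfrac12,\tfrac y2)-\phi(\tfrac{1-x}{2},\tfrac y2)$ stated in Appendix~\ref{appevidence}, the formula for $g_w$ and the values $g_w(w)=g_w(\tfrac12)=0$ check out, and the monotonicity of $\phi(\cdot,y)$ on $[0,\tfrac12]$ does follow from Theorem~\ref{th:knwnprop}. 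But these steps only restate each bullet as a contest between two nonnegative quantities; the decisive analytic content is exactly what you defer. For the first bullet, "showing the sign changes of $g_w'$ are no more numerous than the two known roots force" is the entire difficulty --- unlike $f(b)$ in the proof of Lemma~\ref{lmmnr}, here $g_w'$ involves $\phi$ evaluated at a point where the implicit parameter $r$ depends on $u$ through \eqref{defreq}, and no argument is given that the resulting expression has controllable sign changes. For the second bullet you explicitly flag positive semidefiniteness of the Hessian as "the step I expect to be genuinely hard," and the proposed reduction to the diagonal $p=q$ is not justified: the atom-splitting symmetrisation of Step~2 of Theorem~\ref{thm24} operates on the variational problem defining $\zeta$, not on pointwise Hessian inequalities for $\kappa\circ(H_2^{-1},H_2^{-1})$, so it does not transfer.

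A further caution on the second bullet: your decomposition of $\Phi(p,q)$ via \eqref{eq:kapnonneg} describes the first two $\zeta$-terms as "jointly convex ... whose entropy coordinates are linear in $(p,q)$," but their first two (mean) coordinates are $H_2^{-1}(p)$ and $H_2^{-1}(q)$, which are convex rather than affine in $(p,q)$, so joint convexity of $\zeta$ does not compose to give convexity of that part; and in any case a difference of convex functions carries no convexity information. Since you do not ultimately rely on this, it is a minor imprecision, but it illustrates why the problem resists the soft convexity arguments that work elsewhere in the paper. As written, the proposal is a plausible research plan consistent with the paper's own (numerical) evidence, not a proof.
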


\begin{rem}
The following are worth noting:
\begin{itemize}
  \item By plotting the two-dimensional surface of \( \kappa(u,w) - \kappa(1-u,w) \) for \( u, w \in (0, 0.5) \), one observes that the the difference becomes zero along the line $u=w$ and $u=\frac 12$, and seems to be strictly positive elsewhere.
  \item By plotting the surface of $\frac{\partial^2}{\partial u^2} \kappa(H_2^{-1}(u),H_2^{-1}(w))$, one observes that it is strictly non-negative.
  \item By plotting the surface of the determinant of the Hessian on $\kappa(H_2^{-1}(u),H_2^{-1}(w))$, one also observes that it is positive except along the line $u=w$. But along the line $u=w$, we know that $\kappa(H_2^{-1}(u),H_2^{-1}(w))=0$ and achieves its minimum (implying local convexity).
  \end{itemize}
\end{rem}
\iffalse
While we lack a formal proof, we are confident in the validity of the conjecture presented above: the first part of the conjecture can be numerically validated by plotting the two-dimensional surface of \( r(u,w) - r(1-u,w) \) for \( u, w \in (0, 0.5] \). Likewise, the second part can be confirmed through numerical simulation by plotting the surface associated with the diagonal entries and the determinant of the Hessian matrix.\fi

\begin{thm}\label{thmap}
    If Conjecture \ref{conj2} holds,    then the following lower bound on $\zeta$ holds:
\begin{align*}
\zeta(m_u,m_w,e_u,e_w)\geq&
    \phi\left(\frac{1-|H_2^{-1}(e_u)-H_2^{-1}(e_w)|}{2},\frac{e_u + e_w}{2}\right)+
    (H_2^{-1}(e_u)-H_2^{-1}(e_w))\frac{J(H_2^{-1}(e_w)) - J(H_2^{-1}(e_u))}{2}\\&-\phi\left(\frac{1-|m_u-m_w|}{2},\frac{e_u + e_w}{2}\right).
\end{align*}

\end{thm}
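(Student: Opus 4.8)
The plan is to avoid constructing an extremal point and instead bound, for \emph{every} feasible choice of $(P_X,\{(u_x,w_x)\})$ meeting the four marginal constraints $\EX{u_X}=m_u$, $\EX{w_X}=m_w$, $\EX{H_2(u_X)}=e_u$, $\EX{H_2(w_X)}=e_w$, the quantity $\tfrac12\EX{(u_X-w_X)(J(w_X)-J(u_X))}$ from below by the asserted right-hand side, and then pass to the infimum. (If no feasible configuration exists, or the objective is $+\infty$, the claim is vacuous, since the right-hand side is finite for $e_u,e_w\in(0,1)$ and the boundary cases follow by continuity.) Write $\tilde u=H_2^{-1}(e_u)$, $\tilde w=H_2^{-1}(e_w)$ and $s_x=|u_x-w_x|$, $\epsilon_x=\tfrac{H_2(u_x)+H_2(w_x)}{2}$. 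Using $\zeta(u,w,H_2(u),H_2(w))=\tfrac12(u-w)(J(w)-J(u))$ and the symmetry of $\zeta$ in its first two arguments, the identity in \eqref{eq:kapnonneg} rearranges to
\begin{equation*}
\tfrac12(u-w)(J(w)-J(u))=\kappa(u,w)+\zeta\Big(\tfrac{1+s}{2},\tfrac{1-s}{2},\epsilon,\epsilon\Big),\qquad s=|u-w|,\ \epsilon=\tfrac{H_2(u)+H_2(w)}{2},
\end{equation*}
valid on $(0,1)^2$ (and trivially, as $0=0$, on the added pairs $(0,0),(1,1)$). Averaging over $X$ splits the objective into $\EX{\kappa(u_X,w_X)}$ and $\EX{\zeta(\tfrac{1+s_X}{2},\tfrac{1-s_X}{2},\epsilon_X,\epsilon_X)}$, which I bound separately.

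For the second piece, joint convexity of $\zeta$ and Jensen give $\EX{\zeta(\tfrac{1+s_X}{2},\tfrac{1-s_X}{2},\epsilon_X,\epsilon_X)}\ge\zeta(\tfrac{1+\EX{s_X}}{2},\tfrac{1-\EX{s_X}}{2},\tfrac{e_u+e_w}{2},\tfrac{e_u+e_w}{2})$, since $\EX{\epsilon_X}=\tfrac{e_u+e_w}{2}$; Theorem \ref{thm24} rewrites the latter as $\phi(\tfrac12,\tfrac{e_u+e_w}{2})-\phi(\tfrac{1-\EX{s_X}}{2},\tfrac{e_u+e_w}{2})$ (feasibility of this $\zeta$ is automatic: the pointwise bound $\epsilon_x\le H_2(\tfrac{1-s_x}{2})$ from concavity of $H_2$, averaged, together with concavity of $s\mapsto H_2(\tfrac{1-s}{2})$, forces $\tfrac{e_u+e_w}{2}\le H_2(\tfrac{1-\EX{s_X}}{2})$). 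Since $s\mapsto\zeta(\tfrac{1+s}{2},\tfrac{1-s}{2},e,e)$ is convex (a restriction of a jointly convex function), nonnegative, and vanishes at $s=0$, it is nondecreasing, so $\phi(\tfrac{1-s}{2},e)=\phi(\tfrac12,e)-\zeta(\tfrac{1+s}{2},\tfrac{1-s}{2},e,e)$ is nonincreasing in $s$; combined with $\EX{s_X}=\EX{|u_X-w_X|}\ge|m_u-m_w|$ this yields $\EX{\zeta(\cdots)}\ge\phi(\tfrac12,\tfrac{e_u+e_w}{2})-\phi(\tfrac{1-|m_u-m_w|}{2},\tfrac{e_u+e_w}{2})$.

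For the first piece, Conjecture \ref{conj2} enters exactly once per bullet. The symmetries $\kappa(u,w)=\kappa(w,u)=\kappa(1-u,1-w)$ (immediate from $J(1-x)=-J(x)$, $H_2(1-x)=H_2(x)$), together with the first bullet of Conjecture \ref{conj2}, imply after a short case check on which side of $\tfrac12$ the numbers $u_x,w_x$ lie that $\kappa(u_x,w_x)\ge\kappa\big(H_2^{-1}(H_2(u_x)),H_2^{-1}(H_2(w_x))\big)$ pointwise; taking expectations and then applying the second bullet (joint convexity of $(p,q)\mapsto\kappa(H_2^{-1}(p),H_2^{-1}(q))$) with Jensen gives $\EX{\kappa(u_X,w_X)}\ge\kappa(\tilde u,\tilde w)$. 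Applying the rearranged identity at $(\tilde u,\tilde w)$ — where $H_2(\tilde u)=e_u,\ H_2(\tilde w)=e_w$, and feasibility of the relevant $\zeta$ holds by concavity of $H_2$ since $\tilde u,\tilde w\in[0,\tfrac12]$ — and invoking Theorem \ref{thm24} once more gives $\kappa(\tilde u,\tilde w)=\tfrac12(\tilde u-\tilde w)(J(\tilde w)-J(\tilde u))-\phi(\tfrac12,\tfrac{e_u+e_w}{2})+\phi(\tfrac{1-|\tilde u-\tilde w|}{2},\tfrac{e_u+e_w}{2})$. Adding the two lower bounds, the $\phi(\tfrac12,\cdot)$ terms cancel and precisely the asserted right-hand side remains; taking the infimum over feasible configurations finishes the proof.

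I expect the main obstacle — beyond the bookkeeping of the extended domain (the pairs $(0,0),(1,1)$, where every quantity above is $0$), the $\phi\equiv0$ branch, and boundary values of $e_u,e_w$ — to be the first piece: checking that the first bullet of Conjecture \ref{conj2}, stated only for $u,w\in(0,\tfrac12]$, genuinely propagates through all four sign configurations to the clean pointwise bound $\kappa(u,w)\ge\kappa(H_2^{-1}(H_2(u)),H_2^{-1}(H_2(w)))$, and that the second bullet may be combined with Jensen despite $(p,q)\mapsto\kappa(H_2^{-1}(p),H_2^{-1}(q))$ possibly taking the value $+\infty$ on parts of $\partial([0,1]^2)$ — this is harmless since it is finite at $(e_u,e_w)$ and at every point of the support of $(H_2(u_X),H_2(w_X))$, so extended-real-valued Jensen applies.
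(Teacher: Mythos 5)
Your proof is correct and follows essentially the same route as the paper's: the same decomposition of $\tfrac12(u-w)(J(w)-J(u))$ into $\kappa(u,w)$ plus the perspective term, the same use of the two bullets of Conjecture \ref{conj2} (symmetrization to $(u^*,w^*)$, then Jensen via the conjectured convexity of $\kappa(H_2^{-1}(\cdot),H_2^{-1}(\cdot))$), and the same convexity step for the remaining term. The only cosmetic difference is that you route that last step through the joint convexity of $\zeta$ and a monotonicity-in-$s$ argument, whereas the paper invokes Lemma \ref{corap1} on $(x,y)\mapsto |x|J\left(L^{-1}\left(\tfrac{y}{|x|}\right)\right)$ directly; these are the same fact.
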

\begin{proof}
    [Proof of Theorem \ref{thmap}]     
    For every $u\in[0,1]$, let $u^*=H_2^{-1}(H_2(u))$, i.e.,
    $$u^*=\begin{cases}
        u&\text{if }u\leq 1/2,\\
        1-u&\text{if }u> 1/2.
    \end{cases}$$
    Then, by the first assumption and the fact that $\kappa(u,w)=\kappa(1-u,1-w)$ for every $u,w$ we obtain that 
$$\kappa(u,w)\geq \kappa(u^*,w^*).$$
Therefore, for every $P_{U,W}$ we can write
\begin{align}
   \mathbb{E}[\kappa(U,W)]&\geq \mathbb{E}[\kappa(U^*,W^*)]\\&=\mathbb{E}[\kappa(H_2^{-1}(H(U^*)),H_2^{-1}(H(W^*)))]\\&\geq \kappa(H_2^{-1}(\mathbb{E}H(U^*)),H_2^{-1}(\mathbb{E}H(W^*))), 
\end{align}
where the last step follows from convexity of $\kappa(H_2^{-1}(u),H_2^{-1}(w))$ (conjectured in \Cref{conj2}). 
From the definition of $\kappa$, the above inequality reduces to
\begin{align*}
    \mathbb{E}\left[(U-W)\frac{J(W) - J(U)}{2}\right]&\geq \mathbb{E}\left[|U-W|J\left(L^{-1}\left(\frac{H_2(U) + H_2(W)}{|U - W|}\right)\right)\right]+\kappa(H_2^{-1}(\mathbb{E}H_2(U^*)),H_2^{-1}(\mathbb{E}H_2(W^*)))
    \\&\geq 
    |\mathbb{E}[U]-\mathbb{E}[W]|J\left(L^{-1}\left(\frac{\mathbb{E}[H_2(U)] + \mathbb{E}[H_2(W)]}{|\mathbb{E}[U] - \mathbb{E}[W]|}\right)\right)+\kappa(H_2^{-1}(\mathbb{E}H_2(U^*)),H_2^{-1}(\mathbb{E}H_2(W^*)))
\end{align*}
where the last step follows from the joint convexity of the function (see Lemma \ref{corap1} in Appendix \ref{appendixB}),
$$(x,y)\mapsto |x|J(L^{-1}\left(\frac{y}{|x|}\right).$$
Overall, this yields the following lower bound on $\zeta$:
$$\zeta(m_u,m_w,e_u,e_w)\geq |m_u-m_w|\cdot J(L^{-1}(\frac{e_u + e_w}{|m_u - m_w|}))+\kappa(H_2^{-1}(e_u),H_2^{-1}(e_w)).$$
One can verify that this lower bound can be also expressed as stated in the statement of the theorem.
\end{proof}

The following conjecture (along with Theorem \ref{thmap})  establishes that the function $\phi$ belongs to the class $\Psi$:
\begin{conj}
    For any $m_u,m_w,e_u,e_w\in[0,1]$ where $H_2(m_u)\geq e_u$ and $H_2(m_w)\geq e_w$ we have \begin{align*}
        &\phi\left(\frac{1-|H_2^{-1}(e_u)-H_2^{-1}(e_w)|}{2},\frac{e_u + e_w}{2}\right)+
    (H_2^{-1}(e_u)-H_2^{-1}(e_w))\frac{J(H_2^{-1}(e_w)) - J(H_2^{-1}(e_u))}{2}\\&-\phi\left(\frac{1-|m_u-m_w|}{2},\frac{e_u + e_w}{2}\right)\geq \phi\left(\frac{m_u+m_w}{2},\frac{e_u+e_w}{2}\right)-  \frac 12 \phi(m_u,e_u) - \frac 12 \phi(m_w,e_w).
    \end{align*}
    \label{newconj}
\end{conj}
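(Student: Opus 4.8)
Here is my plan.

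The first move is to replace Conjecture~\ref{newconj} by a cleaner equivalent inequality. Write $\bar e:=\tfrac{e_u+e_w}{2}$, $s:=\tfrac{m_u+m_w}{2}$, $\alpha:=H_2^{-1}(e_u)$, $\beta:=H_2^{-1}(e_w)$, and introduce $G(x,e):=|x|\,J\!\big(L^{-1}(2e/|x|)\big)$ for $x\neq0$, $G(0,e):=0$; this $G$ is even in $x$ and jointly convex (Lemma~\ref{corap1}). From $\phi(\tfrac12,y)=\eta(y)$ and the appendix identity $xJ(L^{-1}(y/x))=\eta(\tfrac y2)-\phi(\tfrac{1-x}{2},\tfrac y2)$ one gets the representation $\phi(m,e)=\eta(e)-G(1-2m,e)$ whenever $H_2(m)\ge e$. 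Since $H_2$ is concave, $H_2\big(\tfrac{1-|\alpha-\beta|}{2}\big)\ge\tfrac{H_2(\alpha)+H_2(\beta)}{2}=\bar e$ and $H_2\big(\tfrac{1-|m_u-m_w|}{2}\big)\ge\tfrac{H_2(m_u)+H_2(m_w)}{2}\ge\bar e$, so this representation applies to both $\phi$-terms on the left of Conjecture~\ref{newconj}; using also $H_2(\alpha)+H_2(\beta)=2\bar e$ in the definition of $\kappa$, that left-hand side telescopes to $\kappa(\alpha,\beta)+G(m_u-m_w,\bar e)$. Thus Conjecture~\ref{newconj} is equivalent to
\begin{equation*}
\kappa(\alpha,\beta)+G(m_u-m_w,\bar e)+\tfrac12\phi(m_u,e_u)+\tfrac12\phi(m_w,e_w)\ \ge\ \phi(s,\bar e).\tag{$\dagger$}
\end{equation*}
Substituting $\phi=\eta-G$ throughout and using $\tfrac12(1-2m_u)+\tfrac12(1-2m_w)=1-2s$, $(\dagger)$ is in turn equivalent to $\kappa(\alpha,\beta)+\Delta_\eta\ge Q-G(m_u-m_w,\bar e)$, where $\Delta_\eta:=\tfrac12\eta(e_u)+\tfrac12\eta(e_w)-\eta(\bar e)\ge0$ (convexity of $\eta$, an elementary one-variable fact) and $Q:=\tfrac12 G(1-2m_u,e_u)+\tfrac12 G(1-2m_w,e_w)-G(1-2s,\bar e)\ge0$ (joint convexity of $G$). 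One disposes of the trivial regime $H_2(s)\le\bar e$ (then $\phi(s,\bar e)=0$) and the corner configurations $m_u\in\{\alpha,1-\alpha\}$, $m_w\in\{\beta,1-\beta\}$ by direct substitution; the remaining task is, for each fixed $e_u\ge e_w$, to bound $\max_{(m_u,m_w)}\big(Q-G(m_u-m_w,\bar e)\big)$ over $[\alpha,1-\alpha]\times[\beta,1-\beta]$ by $\kappa(\alpha,\beta)+\Delta_\eta$.

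The equal-entropy slice $e_u=e_w=e$ is the model case: there $\kappa=\Delta_\eta=0$, and with $g:=G(\cdot,e)$, $p:=1-2s$, $q:=m_u-m_w$ (so $p+q=1-2m_w$, $p-q=1-2m_u$, and $p,q,p\pm q\in[-(1-2H_2^{-1}(e)),1-2H_2^{-1}(e)]$ by the constraints), the inequality to prove reduces to $g(q)+g(p)\ge\tfrac12 g(p+q)+\tfrac12 g(p-q)$. Since $g$ is even with $g(0)=0$, this asserts that the symmetric second difference $p\mapsto\tfrac12 g(p+q)+\tfrac12 g(p-q)-g(p)=\tfrac12\int_0^{q}\!\int_{-r}^{r}g''(p+\tau)\,d\tau\,dr$ is maximized at $p=0$, which follows by a bathtub/rearrangement argument once we know that $x\mapsto\partial_{xx}G(x,e)$ is non-increasing on $[0,1-2H_2^{-1}(e)]$ (then $\int_{p-r}^{p+r}g''$ is maximized at $p=0$ for each admissible $r$). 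That monotonicity is itself a single-variable statement: parametrizing by $t:=L^{-1}(2e/|x|)\in(0,\tfrac12)$ gives $|x|=\tfrac{e(1-2t)}{H_2(t)}$ and $G(x,e)=\tfrac{e(1-2t)J(t)}{H_2(t)}$, from which $\partial_xG$ and $\partial_{xx}G$ follow parametrically (divide successive $t$-derivatives by $d|x|/dt$), and the claim becomes the sign of an explicit function of $t$ --- precisely one of the low-dimensional inequalities the paper reports as numerically robust.

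For $e_u>e_w$ one must locate $\max_{(m_u,m_w)}\big(Q-G(m_u-m_w,\bar e)\big)$ over the box. The plan is to rule out interior critical points by a second-order computation that again reduces to monotonicity of $\partial_{xx}G$ (now also in the $e$-variable), leaving the faces of the box and the diagonal $m_u=m_w$; on each of these, $Q-G(m_u-m_w,\bar e)$ collapses to a function of $(\alpha,\beta)$ alone --- for instance, at $(m_u,m_w)=(\alpha,1-\alpha)$ and at $(m_u,m_w)=(\alpha,\alpha)$ it equals $\tfrac12\eta(e_u)+\tfrac12 G(1-2\alpha,e_w)-G(1-2\alpha,\bar e)$, so the requirement there is the two-variable inequality $\kappa(\alpha,\beta)\ge\phi(\alpha,\bar e)-\tfrac12\phi(\alpha,e_w)$ --- so one is left with a handful of explicit two-variable inequalities in $(\alpha,\beta)$. \emph{This last step is the main obstacle:} proving that the maximum of $Q-G(m_u-m_w,\bar e)$ really sits on these faces/the diagonal, and then verifying the resulting two-variable inequalities (which the plots support); by contrast the reformulation, the equal-entropy case, and the unimodality of $\partial_{xx}G$ are comparatively routine. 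Assembling these pieces establishes $(\dagger)$, hence Conjecture~\ref{newconj}, and therefore --- together with Theorem~\ref{thmap} --- that $\phi\in\Psi$.
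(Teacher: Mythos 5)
First, a point of calibration: the paper does not prove this statement --- it appears only as \Cref{newconj}, supported by ``extensive simulations'' --- so there is no proof of record for you to have missed, and your proposal, by your own admission, is a plan rather than a proof. The part you actually carry out is correct and worthwhile. With $\alpha=H_2^{-1}(e_u)$, $\beta=H_2^{-1}(e_w)$ and $G(x,e)=|x|\,J(L^{-1}(2e/|x|))$, the representation $\phi(m,e)=\eta(e)-G(1-2m,e)$ for $H_2(m)\geq e$ follows from the appendix identity, the concavity argument legitimizes applying it to both $\phi$-terms on the left, and the left-hand side does telescope to $\kappa(\alpha,\beta)+G(m_u-m_w,\bar e)$ because $H_2(\alpha)+H_2(\beta)=2\bar e$ makes the $J(L^{-1}(\cdot))$ term in $\kappa$ coincide with $G(\alpha-\beta,\bar e)$. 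The resulting form $\kappa(\alpha,\beta)+\Delta_\eta\geq Q-G(m_u-m_w,\bar e)$ is a genuine clarification: it exhibits the conjecture as a bound on how much the convexity gap $Q$ of the perspective function $G$ can exceed $G(m_u-m_w,\bar e)$, and your corner evaluations (e.g.\ the reduction to $\kappa(\alpha,\beta)\geq\phi(\alpha,\bar e)-\tfrac12\phi(\alpha,e_w)$ at $(m_u,m_w)=(\alpha,\alpha)$) check out.

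The gaps, however, sit exactly where the difficulty lives, so the conjecture is not closed. (i) Your equal-entropy case hinges on $x\mapsto\partial_{xx}G(x,e)$ being non-increasing on the positive axis. This hypothesis is genuinely needed --- for $g(x)=x^4$, which is even, convex, and vanishes at $0$ but has increasing second derivative, the target inequality $g(p)+g(q)\geq\tfrac12g(p+q)+\tfrac12g(p-q)$ fails since the right side equals $p^4+6p^2q^2+q^4$ --- and it is not proved; moreover, contrary to your parenthetical, it is \emph{not} among the inequalities the paper reports as numerically verified (those are \Cref{conj2} and \Cref{newconj} itself). (ii) For $e_u\neq e_w$, the localization of the maximum of $Q-G(m_u-m_w,\bar e)$ to the faces and diagonal of the box is only asserted, and the two-variable inequalities it would leave behind are new unproven claims of the same flavor as the original conjecture. (iii) Even $\Delta_\eta\geq 0$, i.e.\ convexity of $\eta$, is asserted without argument, though that one is at least a routine single-variable computation. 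In short, the reformulation would make a good lemma, but the proposal trades one open four-variable inequality for several open lower-dimensional ones; the statement remains a conjecture.
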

Since $\phi(x,y)=\phi(1-x,y)$, the inequality remains invariant under the transformation $m_u\leftrightarrow 1-m_u$ (and similarly under the transformation $m_w\leftrightarrow 1-m_w$). Therefore, without loss of generality we can assume that $m_u,m_w\in[0,0.5]$.

Observe that the inequality in Conjecture \ref{newconj} involves only four free variables and holds by extensive simulations.

{\color{blue}
\section{Updates to the original version}

Extensive numerical simulations on Conjectures 4 and 5 have been performed by several independent groups.
\begin{itemize}
    \item A group of undergraduate students, Aya Jmoud and Houssameddine Abous, studying at EPFL, have verified conjectures 4 and 5 for some regimes of parameters $(m_u, m_w, e_u, e_w)$. Please see their upload on github, \hyperlink{https://github.com/housss77/Boolean-MI-Conjecture-Proofs}{https://github.com/housss77/Boolean-MI-Conjecture-Proofs}.
    \item Conjecture 5 does not hold for all parameters, and hence is false as stated above. In particular, it fails at some points near the boundary.
    \begin{itemize}
        \item Using an internally developed Gemini-based agentic system at Google, Adel Javanmard, Honghao Lin, Vahab Mirrokni, and David Woodruff came up with the following counterexamples (and more similar ones) around April 22nd 2026, to Conjecture \ref{newconj}:
        \begin{align*}
    m_u &= 3.13350269308102\times 10^{-10}\\
    m_w &= 3.00850269308102\times 10^{-10}\\
    e_u &= 1.005 \times 10^{-8}\\
    e_w &= 9.95 \times 10^{-9}.
\end{align*}
For this counterexample, the gap between the left-hand-side and the right-hand-side of Conjecture \ref{newconj} is roughly $-1.655 \times 10^{-14}$.
Another counterexample is of the form
\begin{align*}
    m_u &= 0.999999999675730277814977853268\\
    m_w &= 2.99269693032512\times 10^{-10}\\
    e_u &= 1.01 \times 10^{-8}\\
    e_w &= 9.9 \times 10^{-9}.
\end{align*}
For this counterexample, the gap between the left-hand-side and the right-hand-side of Conjecture 5 is roughly $-6.552\times 10^{-14}$.
\item An independent researcher, Fan Zhou, reached out to the authors on July 27, 2026 with another counterexample to Conjecture \ref{newconj},
\begin{align*}
    m_u &= 9.983459207843347 \times 10^{-8}\\
    m_w &= 2.2335357818271923\times 10^{-7}\\
    e_u &= 2.4657561145651218\times 10^{-6}\\
    e_w &= 3.458243821002324\times 10^{-6}.
\end{align*}
The gap for this counterexample is roughly $-7.155 \times 10^{-10}$.
    \end{itemize}
\end{itemize}

\subsection{Status of this approach}
The counterexamples above do not disprove Conjecture \ref{conj:gue}, the main component of our program. Conjecture \ref{conj:gue} posits that
\[ \zeta(m_u,m_w,e_u,e_w) \geq \phi\left(\frac{m_u+m_w}{2},\frac{e_u + e_w}{2} \right) - \frac{1}{2}\phi(m_u,e_u) - \frac{1}{2}\phi(m_w,e_w). \]
We are currently exploring (in collaboration with Javanmard, Lin, Mirrokni, and Woodruff) promising modifications to complete the proof of Conjecture \ref{conj:gue}.
In fact, we have recently developed some other lower bounds on $\zeta(m_u,m_w,e_u,e_w) $ that show that Conjecture \ref{conj:gue} continues to hold for all the known counterexamples, including those above. 
 
}

% \begin{rem}
%     In summary, modulo the explicit inequalities conjectured in \Cref{conj2} and \Cref{newconj}, which has been verified numerically, this differential equation based approach does lead to a solution of the most-informative Boolean function conjecture. We hope that these conjectures can be formally verified by the various formal numerical methods (for instance - interval arithmetic) that are rising in popularity.
% \end{rem}

\section{Some properties of $L(x)$}\label{appendixB}

\begin{thm}The following hold:
    \label{th:knwnprop}
    \begin{enumerate}
    \item $x\mapsto J(L^{-1}(x))$ is convex over $[0,\infty)$,
    \item 
    $x\mapsto \frac{1}{x}J(L^{-1}(x))$ is  decreasing for $x>0$.
    \end{enumerate}
\end{thm}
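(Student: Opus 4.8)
\emph{Proof plan.} The idea is to push everything through the substitution $u=L^{-1}(x)\in[0,\tfrac12)$, so that $x=L(u)=\frac{2H_2(u)}{1-2u}$ and the function of interest $g(x):=J(L^{-1}(x))$ is simply $g=J(u)$. As already used in the paper, $L$ is a strictly increasing bijection of $[0,\tfrac12)$ onto $[0,\infty)$ — indeed $L'(u)=\frac{N(u)}{(1-2u)^2}$ with $N(u):=2J(u)(1-2u)+4H_2(u)>0$ on $[0,\tfrac12)$ — so the orders on $x$ and on $u$ coincide, and the monotonicity/convexity of $g$ in $x$ may be tested in the variable $u$. I will use throughout $H_2'=J$, $J'(u)=\frac{-1}{(\ln 2)\,u(1-u)}$, $J''(u)=\frac{1-2u}{(\ln 2)\,(u(1-u))^2}$ and $N'(u)=2J'(u)(1-2u)$.

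\emph{Part (2), directly.} Write $\frac1x J(L^{-1}(x))=\frac{J(u)(1-2u)}{2H_2(u)}$; its $u$-derivative equals $\frac{1}{2H_2(u)^2}\big([J'(u)(1-2u)-2J(u)]H_2(u)-J(u)^2(1-2u)\big)$. For $u\in(0,\tfrac12)$ one has $J(u)>0$, $J'(u)<0$, $1-2u>0$ and $H_2(u)>0$, so the bracket is a sum of three strictly negative terms; hence $\frac{J(u)(1-2u)}{2H_2(u)}$ strictly decreases in $u$, and since $u=L^{-1}(x)$ increases with $x$, the map $x\mapsto\frac1xJ(L^{-1}(x))$ strictly decreases for $x>0$. (This uses nothing from Part (1); alternatively it would also follow from Part (1), since $xg'(x)-g(x)$ has derivative $xg''(x)\ge0$ and non-positive limit as $x\to\infty$.)

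\emph{Part (1).} By the chain rule $g''(x)$ has the sign of $J''(u)L'(u)-J'(u)L''(u)$. I would compute $L''(u)=\frac{2J'(u)(1-2u)^2+4N(u)}{(1-2u)^3}$, substitute the formulas for $J'$ and $J''$, and collect over the common denominator $(\ln 2)(u(1-u))^2(1-2u)^3$; the numerator then equals $N(u)\big[(1-2u)^2+4u(1-u)\big]+2J'(u)u(1-u)(1-2u)^2$. The two identities $(1-2u)^2+4u(1-u)=1$ and $u(1-u)J'(u)=-\tfrac{1}{\ln 2}$ collapse this to $\tfrac{2}{\ln 2}\,\Phi(u)$, where
\[
\Phi(u):=(1-2u)\ln\tfrac{1-u}{u}-2u\ln u-2(1-u)\ln(1-u)-(1-2u)^2 .
\]
Cancelling the logarithms gives the clean form $\Phi(u)=\ln\frac{1}{u(1-u)}-(1-2u)^2$, so it remains to check $\Phi\ge0$ on $(0,\tfrac12)$. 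Putting $t=u(1-u)\in(0,\tfrac14]$ and $(1-2u)^2=1-4t$, this reads $4t-1-\ln t\ge0$, whose left side has derivative $4-1/t<0$ on $(0,\tfrac14)$ and value $\ln 4>0$ at $t=\tfrac14$, hence is positive. Thus $g''>0$ on $(0,\infty)$, i.e.\ $g$ is strictly convex there; convexity on all of $[0,\infty)$ is then immediate since $g(0)=J(0)=+\infty$ makes the remaining convexity inequalities automatic.

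\emph{Main obstacle.} There is no conceptual difficulty; the only point requiring care is the derivative bookkeeping in Part (1), but it telescopes cleanly thanks to the two elementary identities above, reducing everything to the trivial one-variable inequality $4t-1-\ln t\ge0$.
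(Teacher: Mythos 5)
Your proof is correct and follows essentially the same route as the paper's: both work in the variable $u=L^{-1}(x)$, reduce Part (1) to the sign of $J''(u)L'(u)-J'(u)L''(u)$ and then to the same scalar inequality $-\ln u-\ln(1-u)\ge(1-2u)^2$, and handle Part (2) by direct sign inspection of a derivative. Your substitution $t=u(1-u)$ is a marginally slicker finish than the paper's monotonicity analysis of $t'(u)=\frac{(2u-1)^3}{u(1-u)}$, but otherwise the two arguments coincide.
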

\begin{proof}
Define 
$$N(x)\triangleq J'(x)=\frac{1}{x(x-1)\ln(2)}$$
\begin{enumerate}
    \item
    Note that we have
    \begin{align*}
        \frac{dL^{-1}(x)}{dx} &= \frac{1}{L'(L^{-1}(x))},\\
        \frac{d^2L^{-1}(x)}{dx^2} &= -\frac{L''(L^{-1}(x))}{(L'(L^{-1}(x)))^3}.
    \end{align*}
    Therefore,
    \begin{align*}
        \frac{d J(L^{-1}(x))}{dx}&= \frac{N(L^{-1}(x))}{L'(L^{-1}(x))}\\
        \frac{d^2 J(L^{-1}(x))}{dx^2}&= \frac{N'(L^{-1}(x))}{(L'(L^{-1}(x)))^2}-\frac{N(L^{-1}(x))L''(L^{-1}(x))}{(L'(L^{-1}(x)))^3}.
    \end{align*}
    Let $u = L^{-1}(x) \in \left[0,\frac{1}{2}\right]$, this is equivalent for us to show that
    \begin{align*}
        \frac{N'(u)}{L'(u)^2} - \frac{N(u)L''(u)}{L'(u)^3} \geq 0.
    \end{align*}
    Note that since $L(u) = \frac{H_2(u)}{\frac{1}{2} - u}, \forall u \in \left[0,\frac{1}{2}\right]$, we have 
    \begin{align*}
        L'(u) &= \frac{-2\ln(u) - 2\ln(1-u)}{(1 - 2u)^2\ln2}\\
        L''(u) &= \frac{-8u(1-u)(\ln u + \ln(1-u)) - 2(1-2u)^2}{(1 - 2u)^3u(1-u)\ln2}\\
        N(u) &= -\frac{1}{u(1-u)\ln 2}\\
        N'(u) &= \frac{1}{u^2\ln2} - \frac{1}{(1 - u)^2 \ln2}.
    \end{align*}
    Note that $L'(u) \geq 0$, we have that it is equivalent to showing that
    \begin{align*}
        N'(u)L'(u) - N(u)L''(u)\geq 0.
    \end{align*}
    Substituting for $N(u), N'(u)$, this is equivalent to showing
    \begin{align*}
        (1 - 2u)L'(u) + u(1-u)L''(u)\geq 0.
    \end{align*}
    Substituting $L'(u),L''(u)$, this is equivalent to showing
    \begin{align*}
        -\frac{\ln u + \ln(1-u)}{1-2u} - \frac{4u(1-u)(\ln u + \ln(1-u)) + (1-2u)^2}{(1 - 2u)^3} \geq 0.
    \end{align*}
    Since $u \leq \frac{1}{2}$, this is equivalent to
    \begin{align*}
        -\ln u - \ln(1-u) - \frac{4u(1-u)(\ln u + \ln(1-u))}{(1 - 2u)^2} \geq 1,
    \end{align*}
    which can be further written as
    \begin{align*}
        -\ln u - \ln(1-u) \geq 1 - 4u + 4u^2.
    \end{align*}
    Let $t(u) = -\ln u - \ln(1 - u)  -4u^2 + 4u - 1$, we have that
    \begin{align*}
        t'(u) &= -\frac{1}{u} + \frac{1}{1 - u} - 8u + 4\\
        &= \frac{(2u - 1)^3}{u(1-u)},
    \end{align*}
    which is monotone decreasing on $\left[0,\frac{1}{2}\right]$. Note that the inequality is true for $u = \frac{1}{2}$, by monotonicity, we have that $J(L^{-1}(x))$ is convex on $[0,\infty)$.
    \item Note that
    \begin{align*}
        \frac{d \frac{1}{x}J(L^{-1}(x))}{dx}&= \frac{N(L^{-1}(x))}{xL'(L^{-1}(x))} - \frac{J(L^{-1}(x))}{x^2}.
    \end{align*}
    Let $u = L^{-1}(x) \in \left[0,\frac{1}{2}\right]$, to show that the function is monotone decreasing, since $L'(u) \geq 0$, this is equivalent to showing that
    \begin{align*}
        L(u)N(u) - J(u)L'(u) \leq 0.
    \end{align*}
    On the other hand $L(u) \geq 0, L'(u) \geq 0, J(u) \geq 0$, and $N(u) \leq 0$ for $u \in (0,\frac 12)$, implying the above immediately.
    \end{enumerate}
\end{proof}
\begin{lemma} \label{corap1}The function
    \begin{align}
       (x,y)\mapsto |x|J\left(L^{-1}\left(\frac{y}{|x|}\right)\right)\label{lxy} 
    \end{align}
is jointly convex on $\mathbb{R}\times (0,\infty)$.
\end{lemma}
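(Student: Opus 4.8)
\textbf{Proof proposal for Lemma~\ref{corap1}.} The plan is to view the map in \eqref{lxy} as (the even-in-$x$ extension of) a \emph{perspective function}, so that joint convexity follows from the two properties established in Theorem~\ref{th:knwnprop}. Write $h(x):=J(L^{-1}(x))$ for $x>0$. By Theorem~\ref{th:knwnprop} this $h$ is convex and $x\mapsto h(x)/x$ is decreasing; moreover $h\ge 0$, and since $L^{-1}(x)\to\tfrac12$ as $x\to\infty$ we get $h(x)\to J(\tfrac12)=0$. First I would consider the perspective
$$\widetilde G(s,y):=s\,h(y/s)=y\cdot\frac{h(y/s)}{y/s},\qquad s>0,\ y>0 .$$
Since $h$ is convex, $\widetilde G$ is jointly convex on $\{s>0,\ y>0\}$ (the perspective of a convex function is convex); since $h(x)/x$ is decreasing, $\widetilde G$ is nondecreasing in $s$ for each fixed $y$.

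Next I would push $\widetilde G$ to the boundary $s=0$. Because $x\mapsto h(x)/x$ is decreasing and nonnegative, its limit at $+\infty$ exists, and it equals $0$ since $h(x)\to 0$; hence $\widetilde G(s,y)\to 0$ as $s\downarrow 0$, and I set $\widetilde G(0,y):=0$. This extended $\widetilde G$ stays convex on $\{s\ge 0,\ y>0\}$ --- for instance as the pointwise limit (as $\varepsilon\downarrow 0$) of the convex functions $(s,y)\mapsto\widetilde G(s+\varepsilon,y)$ --- and stays nondecreasing in $s$ on $[0,\infty)$ since $\widetilde G(0,y)=0\le\widetilde G(s,y)$. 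The function in \eqref{lxy} is then exactly $G(x,y):=\widetilde G(|x|,y)$ (with $G(0,y)=0$), and convexity of $G$ follows by the usual composition argument: for $t\in[0,1]$, $(x_0,y_0),(x_1,y_1)\in\mathbb R\times(0,\infty)$, and $\bar y:=(1-t)y_0+ty_1$, the inequality $|(1-t)x_0+tx_1|\le(1-t)|x_0|+t|x_1|$ together with monotonicity of $\widetilde G$ in its first slot gives $\widetilde G(|(1-t)x_0+tx_1|,\bar y)\le\widetilde G((1-t)|x_0|+t|x_1|,\bar y)$, while joint convexity of $\widetilde G$ bounds the latter by $(1-t)\widetilde G(|x_0|,y_0)+t\widetilde G(|x_1|,y_1)$; chaining these is exactly the convexity inequality for $G$.

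The step I expect to require the most care is the gluing across the line $x=0$: on each open half-plane the claim is just the standard "perspective of a convex function is convex," but joining them needs both (i) the correct boundary value at $x=0$, namely the recession limit $0$, which is precisely where $h(x)\to 0$ and Theorem~\ref{th:knwnprop}(2) enter, and (ii) the monotonicity of $\widetilde G$ in $s$, again supplied by Theorem~\ref{th:knwnprop}(2); mere continuity across $x=0$ would not force convexity there. A more computational alternative would be to verify positive semidefiniteness of the Hessian of $(s,y)\mapsto s\,h(y/s)$ on $\{s>0\}$ directly (it reduces to $h''\ge 0$) and then handle $x=0$ by the same limiting/monotonicity argument, but the perspective viewpoint makes the bulk of the work immediate.
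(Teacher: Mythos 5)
Your proof is correct and follows essentially the same route as the paper: joint convexity of the perspective of the convex function $J\circ L^{-1}$ on the positive orthant, monotonicity in the first slot via Theorem~\ref{th:knwnprop}(2), and the triangle inequality to pass to $|x|$. Your explicit treatment of the boundary $x=0$ (via the recession limit $h(x)\to 0$) is a careful addition the paper leaves implicit, but it does not change the argument.
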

\begin{proof}
    Since $J(L^{-1}(x))$ is convex, the corresponding perspective function
\begin{align}
    xJ\left(L^{-1}\left(\frac{y}{x}\right)\right)
\end{align}
is jointly convex for $x,y>0$. Since $x\mapsto \frac{1}{x}J(L^{-1}(x))$ is  decreasing for $x>0$, for every fixed $y>0$, $x\mapsto xJ(L^{-1}(y/x))$ is  increasing in $x$. Thus, 
\begin{align*}
    \left|\frac{x_1+x_2}{2}\right|J\left(L^{-1}\left(\frac{\frac{y_1+y_2}{2}}{\left|\frac{x_1+x_2}{2}\right|}\right)\right)&\leq \frac{|x_1|+|x_2|}{2}J\left(L^{-1}\left(\frac{\frac{y_1+y_2}{2}}{\frac{|x_1|+|x_2|}{2}}\right)\right)
    \\&\leq \frac12|x_1|J\left(L^{-1}\left(\frac{y_1}{|x_1|}\right)\right)+\frac12|x_2|J\left(L^{-1}\left(\frac{y_2}{|x_2|}\right)\right).
\end{align*}
    This shows that \eqref{lxy} is jointly convex.
\end{proof}

% Proof for $\frac{1}{x}J(L^{-1}(x))$ is monotone decreasing on $[0,\infty)$.
% \begin{proof}
%     Note that
%     \begin{align*}
%         \frac{d \frac{1}{x}J(L^{-1}(x))}{dx}&= \frac{N(L^{-1}(x))}{xL'(L^{-1}(x))} - \frac{J(L^{-1}(x))}{x^2}.
%     \end{align*}
%     Let $u = L^{-1}(x) \in \left[0,\frac{1}{2}\right]$, to show that the function is monotone decreasing, since $L'(u) \geq 0$, this is equivalent to showing that
%     \begin{align*}
%         L(u)N(u) - J(u)L'(u) \leq 0.
%     \end{align*}
%     This is equivalent to
%     \begin{align*}
%         -\frac{L(u)}{u(1-u)}\leq \ln \frac{1-u}{u} L'(u).
%     \end{align*}
%     This is equivalent to showing that
%     \begin{align*}
%         -\frac{(1-2u)(-u\ln u - (1-u)\ln(1-u))}{u(1-u)}\leq \ln \frac{1-u}{u} (-\ln u - \ln(1-u)).
%     \end{align*}
%     This can be written as
%     \begin{align*}
%         u(1-u)(\ln^2(1 - u) - \ln^2(u))\leq (1-2u)(-u\ln u - (1-u)\ln(1-u)).
%     \end{align*}
%     Note that we have
%     \begin{align*}
%         u(1 - u)(\ln^2(1 - u) - \ln^2(u)) &= u(1-u)(\ln(1-u) - \ln(u))(\ln(1 - u) + \ln(u))\\
%         &\leq 0\\
%         &\leq (1 - 2u)\frac{H_2(u)}{\ln2},
%     \end{align*}
%     when $u \leq \frac{1}{2}$. Therefore, we have that $\frac{1}{x}J(L^{-1}(x))$ is monotone decreasing.
% \end{proof}
\end{document}